\newtheorem{theorem}{Theorem}[section]
\newtheorem{proposition}[theorem]{Proposition}
\newtheorem{lemma}[theorem]{Lemma}
\newtheorem{corollary}[theorem]{Corollary}
\newtheorem{definition}[theorem]{Definition}
\theoremstyle{definition}
\newenvironment{example}
  {\pushQED{\qed}\examplex}
  {\popQED\endexamplex}
\newenvironment{remark}
  {\pushQED{\qed}\remarkx}
  {\popQED\endremarkx}
\numberwithin{equation}{section}
\newcommand{\Wick}[1]{{\mathpunct{:}#1\mathpunct{:}}}
\newcommand{\setsuch}[2]{\{#1\colon #2\}}
\newcommand{\unit}{\mathbf{1}}
\newcommand{\X}{\mathbf{X}}
\newcommand{\bk}{{\bs{k}}}
\newcommand{\bn}{{\bs{n}}}
\newcommand{\bm}{{\bs{m}}}
\newcommand{\PiNBPHZ}{\Pi_N^{\mathrm{BPHZ}}}
\DeclareMathOperator{\id}{id}
\begin{document}


\title{Perturbation theory for the $\Phi^4_3$ measure, \\
revisited with Hopf algebras}
\author{Nils Berglund, Tom Klose}
\date{5 October 2023}   

\maketitle

\begin{abstract}
\noindent
We give a relatively short, almost self-contained proof of the fact that the 
partition function of the suitably renormalised $\Phi^4_3$ measure admits an 
asymptotic expansion, the coefficients of which converge as the ultraviolet 
cut-off is removed. We also examine the question of Borel summability of the 
asymptotic series. The proofs are based on Wiener chaos expansions, 
Hopf-algebraic methods, and bounds on the value of Feynman diagrams 
obtained through BPHZ renormalisation. 
\end{abstract}

\leftline{\small 2010 {\it Mathematical Subject Classification.\/} 
60H15, 
35R11 (primary), 
81T17, 
82C28 (secondary). 
}
\noindent{\small{\it Keywords and phrases.\/}
Phi-four-three model,
BPHZ renormalisation,
Hopf algebras, 
Wiener chaos expansion,
cumulants.
}  



\section{Introduction} 
\label{sec:intro}

The $\Phi^4_d$ model, defined on the $d$-dimensional torus with 
$d\in\set{1,2,3}$, is probably one of the simplest non-trivial models in 
Euclidean quantum field theory. Here non-trivial means that the model can be 
proven to behave differently from a Gaussian field. In dimension $d=4$, it has 
been shown that the $\Phi^4_d$ model is indeed 
trivial~\cite{Aizenmann-Duminil-Copin}.

Despite it being simpler than other models, the analysis of the $\Phi^4$ model 
is by no means easy. The earliest works by Glimm and Jaffe and by Feldman 
approached the problem via a detailed combinatorial analysis of Feynman 
diagrams~\cite{Glimm_Jaffe_68,Glimm_Jaffe_73,Feldman74,Glimm_Jaffe_81}, 
entailing very long and technical proofs. Over the years, the analysis of the 
model has been gradually simplified. The works~\cite{BCGNOPS78,BCGNOPS_80} 
introduced the idea of using a renormalisation group approach, consisting in a 
decomposition of the covariance of the underlying Gaussian reference field into 
scales, which then allows to integrate sucessively over one scale after the 
other. This method was further perfected in~\cite{Brydges_Dimock_Hurd_CMP_95}, 
using polymers to control error terms, an approach based on ideas from 
statistical phy\-sics~\cite{Gruber_Kunz_71}. 

In another direction, the approach provided in~\cite{Brydges_Frohlich_Sokal_CPM83,Brydges_Frohlich_Sokal_CMP83_RW} allows to bound correlation functions (or $n$-point functions) without having to compute the partition function explicitly, by using it as a generating function. This involves the derivation of ske\-le\-ton inequalities, which were obtained up to third order in~\cite{Brydges_Frohlich_Sokal_CPM83}, and later extended to all orders in~\cite{Bovier_Felder_CMP84}. 
As an application, the latter work contains another proof for the asymptotic nature of perturbation theory for the~$\Phi^4_d$ model in dimensions~$d \in \{2,3\}$, a result that had earlier been obtained by different methods in~\cite{Feldman_Osterwalder_76} when~$d=3$ and~\cite{Dimock_1974} when~$d = 2$. In fact, based upon earlier work in~\cite{Glimm_Jaffe_Spencer_1974}, the article~\cite{Dimock_1974} establishes this result for more general polynomial~$P(\Phi)_2$ models.
A relatively compact derivation of bounds on the partition function based on the Bou\'e--Dupuis formula was recently obtained in~\cite{Barashkov_Gubinelli_18}. 
Besides, novel techniques based on singular stochastic PDEs have led to a new proof that the perturbative expansion for~$\Phi^4_2$ is asymptotic~\cite{Shen_Zhu_Zhu_23}.

In this review, we argue that there is still room for improvement in the analysis of the $\Phi^4_3$ model, taking advantage of quite recent developments in more algebraic approaches. 
%
We will present a rather compact argument which shows that, after sui\-ta\-ble renormalisation, all the terms in the perturbatively expanded partition function of the~$\Phi^4_3$ model are uniformly bounded in the cut-off parameter. This is the main contribution of our work.

It is well-known that~the perturbative series does \emph{not} converge~\cite{Jaffe_1965}, but that it can be resummed using the theory of Borel transforms; Sokal's theorem~\cite{Sokal80} provides two sufficient conditions under which that resummation procedure works. 
We will demonstrate that our setup provides a convenient framework to check one of these conditions, the \emph{remainder bounds}, in a systematic way.---However, we are only able to present a complete proof under a fairly strong moment bound assumption which seems highly non-trivial to check on its own part.
We \emph{do not} make any statement about (local) analyticity, the second condition, and instead only assume its validity.

Important sources of inspiration for our approach are the 
monograph~\cite{Peccati_Taqqu_book} by Peccati and Taqqu on Wiener chaos 
and cumulant expansions, the article~\cite{EFPTZ18} by Ebrahimi-Fard 
\textit{et al.} on deformations of Hopf algebras, and Hairer's  
overview~\cite{Hairer_BPHZ} of BPHZ renormalisation. 

This article is organized as follows. In Section~\ref{sec:set-up}, we introduce 
the set-up, including a definition of the renormalised $\Phi^4_3$ measure with 
cut-off $N$. In Section~\ref{sec:expansion}, we give a relatively concise proof 
of the fact that the partition function of the model admits a perturbative expansion in terms that converge as the cut-off $N$ is sent to infinity.
Finally, in Section~\ref{sec:Borel}, we address the question of Borel 
resummation of the per\-tur\-ba\-tive expansion and explain how our approach allows to systematize parts of its proof.

\smallskip

\textit{Acknowledgements:} Both authors thank the Erwin Schr\"odinger 
International Institute for Mathematics and Physics (ESI) of the University of 
Vienna for its kind hospitality and financial support during the Masterclass 
and Workshop of the Graduate School on “Higher Structures Emerging from 
Renormalisation” (8 to 19 November 2021).
NB was partly supported by the ANR project PERISTOCH, grant ANR--19--CE40--0023.
TK additionally thanks the Institut Denis Poisson in Orléans for its generous 
financial support and the warm hospitality during his visit in June 2022.
The article at hand has been completed while TK was employed at TU Berlin.
Both authors thank the two anonymous referees whose comments helped improve the presentation in this article.


\section{Set-up} 
\label{sec:set-up}

We are interested in the invariant measure of the massive $\Phi^4_3$ model on 
the torus \mbox{$\Lambda = \T^3 = (\R/\Z)^3$}, which can be formally defined as 
\begin{equation}
	\label{eq:phi4_measure} 
	\mu_{\Phi^4_3}(\6\phi) = \frac{1}{Z(\eps)}
	\exp\Biggset{-
		\int_\Lambda \Biggpar{\frac12 \norm{\nabla \phi(x)}^2 
			+ \frac{m^2}{2} \phi(x)^2 + \frac{\eps}{4}\phi(x)^4} \6x}
	\6\phi\;,
\end{equation}
where the partition function $Z(\eps)$ is the normalisation making 
$\smash{\mu_{\Phi^4_3}}$ a probability measure. In what follows, we will 
consider for convenience the case $m^2 = 1$. However, there is no difficulty in 
extending the results to any $m^2\geqs0$ by a Gaussian change of measure. In 
fact, even negative values of $m^2$ can be considered: they appear in the 
stochastic Allen--Cahn equation, see for instance 
\cite{BerglundDiGesuWeber,B-SPDE_book}.

As such, the measure~\eqref{eq:phi4_measure} is ill-defined, because there is 
no Lebesgue measure $\6\phi$ on $L^2(\Lambda)$. This issue can be solved in 
several 
steps, the first of which consists in considering a regularised version of the 
problem. Here it will be convenient to use a spectral Galerkin approximation 
with ultra-violet cut-off $N$. For $k\in\Z^3$, we write $e_k(x) = \exp(2\pi 
\icx k\cdot x)$ for the Fourier basis functions of $L^2(\Lambda)$, and set 
\begin{equation}
	\cH_N := \operatorname{span} \bigset{e_k \colon k \in \cK_N}\;, \qquad
	\cK_N := \bigset{k \in \Z^3 \colon \abs{k} := \abs{k_1} + \abs{k_2} + 
		\abs{k_3} 
		\leqs N}\;.
\end{equation} 
For any finite $N$, \eqref{eq:phi4_measure} defines a probability measure on 
$\cH_N$. In particular, the partition function can be written as 
\begin{equation}
	Z_N(\eps) 
	= Z_N(0)\;
	\Biggexpecin{\mu_N}{\exp\biggset{-\frac\eps4\int_\Lambda \phi(x)^4 \6x}}\;, 
\end{equation} 
where $\mu_N$ is the Gaussian measure on $\cH_N$ with covariance function 
$[-\Delta + 1]^{-1}$. 

The limit $N\to\infty$ of this sequence of measures is not well-defined, which 
is why a renormalisation procedure is required. The first step of this 
procedure is called Wick renormalisation. It consist in 
replacing~\eqref{eq:phi4_measure} for finite $N$ by 
\begin{equation}
	\label{eq:phi4_measure_Wick} 
	\mu_{\Phi^4_3,N}^{\text{Wick}}(\6\phi) = \frac{1}{Z_N(\eps)}
	\exp\Biggset{-
		\int_\Lambda \Biggpar{\frac12 \norm{\nabla \phi(x)}^2 
			+ \frac{1}{2} \phi(x)^2 + \frac{\eps}{4}\Wick{\phi(x)^4}} \6x}
	\6\phi\;,
\end{equation}
where we write 
\begin{equation}
	\Wick{\phi(x)^n} := H_n\bigpar{\phi(x), C_N^{(1)}}\;.
\end{equation} 
Here $H_n(\cdot, C)$ denotes the $n$th Hermite polynomial with variance $C$, 
and 
\begin{equation}
	C_N^{(1)} := \frac{1}{\abs{\Lambda}} \Tr\bigpar{(-\Delta + 1)^{-1}}
\end{equation} 
is a counterterm which diverges like $N$. 

In the case of the two-dimensional torus, the analogue of the 
measure~\eqref{eq:phi4_measure_Wick} is known to converge to a well-defined 
limit. However, in the three-dimensional case, additional counterterms are 
required. The highly non-trivial result is that three additional such terms are 
sufficient. The correctly renormalised measure takes the form 
(see for instance~\cite[p.~145]{BCGNOPS78})
\begin{multline}
	\label{eq:phi4_measure_BPHZ} 
	\mu_{\Phi^4_3,N}^{\text{BPHZ}}(\6\phi) = \frac{1}{Z_N(\eps)}
	\exp\Biggset{-
		\int_\Lambda \Biggpar{\frac12 \norm{\nabla \phi(x)}^2 \\
			+ \frac{1}{2} \bigbrak{1 - \eps^2 C_N^{(2)}} \phi(x)^2 + 
			\frac{\eps}{4}\Wick{\phi(x)^4}
			+ \eps^2 C_N^{(3)} - \eps^3 C_N^{(4)}} \6x}
	\6\phi\;,
\end{multline}
where the new counterterms are defined as follows. We write $G_N$ for the 
truncated Green function given by 
\begin{equation}
	\label{eq:cov_gff_N} 
	G_N(x,y) = G_N(x-y) 
	:= \sum_{k\in\cK_N} \frac{1}{\lambda_k+1} 
	\underbrace{e_k(x)\overline{e_k(y)}}_{= e_k(x-y)}\;,
\end{equation} 
where $-\lambda_k = (2\pi)^2\norm{k}^2$ are the eigenvalues of the Laplacian on 
$\Lambda$. Then we have 
\begin{align}
	C_N^{(2)} &:= 3! \int_\Lambda G_N(x)^3\6x = \Order{\log N}\;,  \\
	\label{eq:counterterms} 
	C_N^{(3)} &:= \frac{4!}{2!4^2} \int_\Lambda G_N(x)^4\6x
	= \Order{N}\;, \\
	C_N^{(4)} &:= \frac{2^3}{3!4^3} \binom{4}{2}^3 
	\int_{\Lambda}\int_{\Lambda} 
	G_N(x)^2 G_N(y)^2 G_N(x-y)^2 \6x\6y
	= \Order{\log N}\;. 
\end{align}
Our aim in the following is to provide a compact partial proof of this result, 
based on recent developments in combinatorics in the Wiener chaos, on 
Hopf-algebraic methods, and on analytic bounds for BPHZ renormalisation. 
More precisely, we are going to address the question of convergence of the 
truncated partition function $Z_N$ (in the sense of \emph{formal power series})
in the limit $N\to\infty$. 

It will be useful to introduce some additional notation. We 
will use the symbols
\begin{equation}
	\label{def:XY} 
	X = \FGfour := \int_{\T^3} \Wick{\phi(x)^4}\6x\;, \qquad
	Y = \FGtwo := \int_{\T^3} \Wick{\phi(x)^2}\6x
\end{equation} 
for Wick powers, as well as the shorthands 
\begin{equation}
	\alpha := \frac{\eps}{4}\;, \qquad
	\label{eq:alpha_beta_gamma} 
	\beta := \frac12\eps^2 C_N^{(2)}\;, \qquad
	\gamma := \eps^2 C_N^{(3)} - \eps^3 C_N^{(4)}\;. 
\end{equation}
In this way, the ratio of partition functions can be written as 
\begin{equation}
	\frac{Z_N(\eps)}{Z_N(0)}
	= \bigexpec{\e^{-\alpha X - \beta Y - \gamma}}
	= \e^{-\gamma} \bigexpec{\e^{-\alpha X - \beta Y}}\;.
\end{equation} 
Integrals as in~\eqref{eq:counterterms} can be conveniently expressed as 
Feynman diagrams (more precisely, vacuum diagrams). If $\Gamma = (\cV,\cE)$ is 
a multigraph with vertex set $\cV$ and edge set $\cE$ (multiple edges between 
vertices are allowed), then the \emph{valuation} $\Pi_N$ is the map defined by 
\begin{equation}
	\Pi_N(\Gamma) := \int_{\Lambda^\cV} \prod_{e\in\cE} G_N(x_{e_+} - x_{e_-}) 
	\6x\;, 
\end{equation} 
where $e_\pm$ denote the vertices connected by the edge $e$ ($G_N$ being 
even, their order does not matter here). In particular, we have the expressions 
\begin{alignat}{3}
	C_N^{(1)} & = \Pi_N\FGLoop \;, \qquad 
	&& C_N^{(2)} && = 3! \Pi_N\FGIII \;, \qquad \\
	C_N^{(3)} & = \frac{4!}{2!4^2} \Pi_N\FGIV \;, \qquad 
	&& C_N^{(4)} && = \frac{2^3}{3!4^3} \binom{4}{2}^3 \Pi_N\FGVI
\end{alignat}  
for the counterterms. The graphical notation emphasizes how these expressions 
are a consequence of Wick calculus, which states in particular that the 
expectation of a product of Wick powers can be written as a sum over all 
pairings of their \lq\lq legs\rq\rq, also called contractions, see for 
instance~\cite{Peccati_Taqqu_book} as well as Example~\ref{ex:PT} below. 

There are two different questions that one may want to address: 
\begin{enumerate}
	\item Show that the partition function (or its logarithm) admits a perturbative expansion in powers of $\eps$ with coefficients that converge to finite limits as $N \to \infty$.
	
	\item Analyse the Borel summability of the perturbative series. Indeed, it is 
	known that the perturbative expansion of the partition function will remain 
	divergent, even after removing all divergences in terms of $N$. Nonetheless, 
	Borel summation allows to recover information on the partition function from 
	its Borel transform. 
\end{enumerate}
We present a complete answer to the first question in Section~\ref{sec:expansion}.
We address the second question in Section~\ref{sec:Borel} and present some new ideas that help simplify some aspects in the proof of Borel resummability.


\section{Perturbative expansion} 
\label{sec:expansion}


\subsection{Cumulant expansion}
\label{ssec:cumulant}

Define the centred moments 
\begin{align}
	\mu_n &:= (-1)^n \biggexpec{\biggpar{\alpha \FGfour + \beta \FGtwo}^n} \\
	&= (-1)^n \sum_{m=0}^n \binom{n}{m} \alpha^m \beta^{n-m} A_{nm}\;, &
	A_{nm} &:= \biggexpec{\FGfour^m \FGtwo^{n-m}}\;.  
\end{align} 
As already alluded to, the coefficients $A_{nm}$ can be computed using the 
properties of Wick calculus, by summing over all contractions, that is, all 
pairings of legs of different diagrams (see Example~\ref{ex:PT} for more 
details). For instance, we have 
\begin{align}
	\mu_2 &= \alpha^2 4! \Pi_N \FGIV + \beta^2 2! \Pi_N \FGII\;, \\
	\mu_3 &= -\alpha^3 \binom{4}{2}^3 2^3 \Pi_N \FGVI 
	- 3\alpha^2\beta (4^2\cdot 2\cdot 3!) \Pi_N \FGIIIplus \\
	& \quad - 3\alpha\beta^2 4!\, \Pi_N \FGIItwice
	- 8\beta^3 \Pi_N \FGtriangle\;.
\end{align}
We see that each $A_{nm}$ has the form of a combinatorial numerical constant 
times the value of a Feynman diagram, obtained by performing all possible 
contractions. In general, $A_{nm}$ may be a linear combination of Feynman 
diagrams, and these diagrams need not all be connected. For instance, $A_{44}$ 
contains the term
\begin{equation}
	\label{eq:non-connected} 
	3\cdot (4!)^2 \biggpar{\Pi_N \FGIV}^2\;,
\end{equation} 
where the combinatorial factor $3$ counts the number of pairings of the $4$ 
four-vertex diagrams, and each factor $4!$ counts the number of pairwise 
matchings of the legs within each pair. 

The cumulant expansion reads 
\begin{equation}
	\label{eq:cumulant} 
	-\log \expec{\e^{-\alpha X - \beta Y - \gamma}}
	= \gamma - \sum_{n=2}^\infty \frac{\kappa_n}{n!}\;,
\end{equation} 
where the coefficients $\kappa_n$ can be computed recursively 
with the Leonov--Shiraev relation 
\begin{equation}
	\kappa_n = \mu_n - \sum_{m=2}^{n-2} \binom{n-1}{m} \kappa_m \mu_{n-m}\;.
\end{equation} 
It will be useful to write 
\begin{equation}
	\kappa_n = (-1)^n \sum_{m=0}^n \binom{n}{m} \alpha^m \beta^{n-m} B_{nm}\;,
\end{equation} 
where the coefficients $B_{nm}$ are again linear combinations of Feynman 
diagrams. The first few cumulants are 
\begin{align}
	\kappa_2 &= \mu_2\;,\\
	\kappa_3 &= \mu_3\;,\\
	\kappa_4 &= \mu_4 - 3\mu_2^2\;,\\
	\kappa_5 &= \mu_5 - 10\mu_2\mu_3\;.
\end{align}
An important observation is that $\kappa_n - \mu_n$ is always either zero, or a 
sum of products of at least two factors. In terms of Feynman diagrams, this 
means that $\kappa_n - \mu_n$ is a linear combination of non-connected graphs. 

In particular, we see that the term $- 3\mu_2^2$ kills exactly the 
non-connected 
term~\eqref{eq:non-connected} of $\mu_4$. Therefore, $\kappa_4$ is represented 
by a linear combination of \emph{connected} Feynman diagrams. 
The fact that this generalises to all cumulants is well known in the quantum field theory 
literature as the \emph{linked cluster theorem}. 
We refer the reader to the ar\-ti\-cles~\cite[Sec.~3]{Brouder09} and~\cite[Sec.~4.2]{Rivasseau_09} as well as the monograph~\cite[Sec.~2]{Salmhofer_Renormalization}.

\begin{proposition}[Linked cluster theorem] \label{conj:1}
	Every $\kappa_n$ is obtained by projecting~$\mu_n$ onto the space spanned by \emph{connected} Feynman diagrams. 
\end{proposition}
There are many ways in which one can prove the previous proposition. 
We will follow the Peccati--Taqqu approach~\cite{Peccati_Taqqu_book} in which Feynman diagrams naturally arise from so-called \enquote{diagram formulae} that are well-known in Wiener chaos theory.
We believe that this approach is particularly appealing to a probabilistically minded au\-dien\-ce and give more details in the next subsection.


\subsection{A combinatorial \enquote{proof} of Proposition~\ref{conj:1}}
\label{ssec:combinatorial_peccati_taqqu}

With~$X$ and~$Y$ as introduced in~\eqref{def:XY}, we set 
\begin{align}
X & \equiv \FGfour \equiv \int_{\T^3} \Wick{\phi(x)^4}\6x =: \int_{\T^3} X(x) 
\6x\;, \quad X(x) := \FGfourlabel{x}\;,\\
Y & \equiv \FGtwo \equiv \int_{\T^3} \Wick{\phi(x)^2}\6x =: \int_{\T^3} Y(x) 
\6x\;, \quad Y(x) := \FGtwolabel{x}\;, 
\end{align} 
and by~$\mu_N$ we always denote the Gaussian Free Field (GFF) at cut-off 
level~$N$, i.e. the centered Gaussian measure on~$L^2(\Lambda)$ with covariance 
kernel~$G_N(x,y) = G_N(x-y)$ given in~\eqref{eq:cov_gff_N}.

We also define
\begin{equation}
\tilde{G}_N(x) := \sum_{k \in \mathcal{K}_N} \frac{1}{\sqrt{\lambda_k + 1}} 
e_k(x)
\end{equation} 
so that~$(\tilde{G}_N * \tilde{G}_N)(x) = G_N(x)$ when~$*$ denotes 
convolution, as can be verified by a straightforward calculation.

Whenever no explicit measure is mentioned, the reference measure is always that 
of spatial white noise on~$L^2(\Lambda)$, i.e., the centred Gaussian measure 
with covariance given by the Dirac kernel~$\delta(x-y)$.
Accordingly, we write~$\phi \sim \mu$ as  
\begin{equation}
\phi(x) = I_1\bigpar{\tilde{G}_N(x-\cdot)} 
=: \int_\Lambda \tilde{G}_N(x-z) \xi(\6z)\;,
\label{eq:phi}
\end{equation}
where~$I_1$ is the first Wiener-It\^{o} isometry with respect to spatial white 
noise, see for example the textbook by Nua\-lart~\cite{nualart2006malliavin}.
Note that this is consistent with the calculation
\begin{align}
\E^{\mu_N}\biggpar{\phi(x)\phi(y)}
& = 
G_N(x-y) 
=
(\tilde{G}_N * \tilde{G}_N)(x-y)
=
\int_\Lambda 
\tilde{G}_N(z) \tilde{G}_N(z-(x-y)) \6 z \\
& 
=
\int_\Lambda 
\tilde{G}_N(x-z) \tilde{G}_N(y-z) \6 z
=
\E \biggpar{I_1\bigpar{\tilde{G}_N(x-\cdot)} 
I_1\bigpar{\tilde{G}_N(y-\cdot)} }\;,
\end{align}
where we have used that~$G_N$ is even, i.e.~$G_N(x) = G_N(-x)$, and translation 
invariance in the penultimate step.

It is well-known that products of stochastic integrals such as~$\phi$ in 
\eqref{eq:phi} produce correction terms in lower order Wiener-It\^{o} chaoses 
(see e.g.~\cite[Prop.~$1.1.3$]{nualart2006malliavin}) --- but the Wick product 
is the projection onto the highest component, so we have
\begin{equation}
X(x) = \Wick{\phi(x)^4} = I_4(\tilde{G}_N(x-\cdot)^{\otimes 4})\;, \quad
Y(x) = \Wick{\phi(x)^2} = I_2(\tilde{G}_N(x-\cdot)^{\otimes 2})\;.
\end{equation}
Recall that 
\begin{equation}
\kappa(X_1,\ldots,X_n) 
:= \left.\frac{\partial^n}{\partial t_1 \dots \partial t_n} \log 
\E\biggpar{\exp\biggpar{\sum_{\ell=1}^n t_\ell X_\ell}}\right|_{t_1 = \ldots 
t_n 
= 0}
\end{equation}
denotes the~\emph{cumulant functional}. 
With~$\kappa_n(X) = \kappa(X, \ldots, X)$ where~$\kappa$ has $n$ entries, we 
will use the well-known binomial-type formula 
\begin{align}
\kappa_n(\alpha X + \beta Y)
& = 
\sum_{k=0}^n \binom{n}{k} \alpha^k \beta^{n-k} 
\kappa(\underbrace{X,\ldots,X}_{k~\text{times}},\underbrace{Y,\ldots,Y}_{
(n-k)~\text{times}})\;.
\end{align}
Multi-linearity of~$\kappa$ also gives
\begin{equation}
\kappa(\underbrace{X,\ldots,X}_{k~\text{times}},\underbrace{Y,\ldots,Y}_{
(n-k)~\text{times}})
=
\int_{\Lambda^n} 
\kappa\bigpar{X(x_1),\ldots,X(x_k),Y(x_{k+1}),\ldots,Y(x_n)}\; 
\6x_{1:n}
\end{equation}
where~$\6 x_{1:n} := \6 x_1 \ldots \6 x_n$.
The following theorem implies the validity of Proposition~\ref{conj:1}.

\begin{theorem} \label{thm:cumu_graphs}
The identity
\begin{equation}
\kappa(\underbrace{X,\ldots,X}_{k~\text{times}},\underbrace{Y,\ldots,Y}_{
	(n-k)~\text{times}})
=
\sum_{\Gamma \in \cG} \Pi_N \Gamma
\end{equation}
holds, where~$\cG = \cG(k,n)$ denotes the set of connected 
multigraphs without self-loops that correspond to pairwise matchings.
\end{theorem}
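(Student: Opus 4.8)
\emph{Proof plan.} The plan is to reduce the statement to the classical diagram formula for mixed moments of Wick powers (Wick's theorem) combined with the moment--cumulant inversion, in the spirit of~\cite{Peccati_Taqqu_book}. Since $N$ is fixed, $\phi$ takes values in the finite-dimensional space $\cH_N$ and $G_N$ is a smooth bounded kernel, so every integral below converges absolutely and the content is purely combinatorial. Using $X(x) = I_4(\tilde G_N(x-\cdot)^{\otimes 4})$ and $Y(x) = I_2(\tilde G_N(x-\cdot)^{\otimes 2})$ together with the product formula for multiple Wiener--It\^o integrals (equivalently, the generating function of Hermite polynomials), one obtains Wick's theorem in the form
\[
\E\Bigbrak{\,\textstyle\prod_{i=1}^{k} X(x_i)\prod_{j=k+1}^{n} Y(x_j)\,}
= \sum_{D} \prod_{e\in D} G_N(x_{e_+}-x_{e_-})\,,
\]
where $D$ runs over all perfect matchings (``Feynman diagrams'') of the set of legs --- four legs at each of $x_1,\dots,x_k$ and two legs at each of $x_{k+1},\dots,x_n$ --- subject to no leg being matched with a leg at the same point. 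The no-self-loop constraint is precisely the effect of Wick ordering (it suppresses the diagonal contractions that would otherwise produce factors $C_N^{(1)}=G_N(0)$), while the weight $G_N$ per edge comes from $\int_\Lambda \tilde G_N(x_i-z)\tilde G_N(x_j-z)\6z = (\tilde G_N*\tilde G_N)(x_i-x_j) = G_N(x_i-x_j)$. If full self-containedness is desired, this step admits a short direct proof; otherwise it can be quoted.

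Next I would pass to cumulants. A matching $D$ induces a partition $\pi(D)$ of $\{1,\dots,n\}$ into the connected components of the graph it spans, and $D$ restricts to a connected matching $D_B$ on each block $B\in\pi(D)$; conversely, a choice of connected matching on each block of a partition reassembles into such a $D$. Since the edge weight factorises over blocks, the moment formula rewrites as $\E[\prod_i(\cdots)] = \sum_{\pi\,\vdash\,\{1,\dots,n\}}\prod_{B\in\pi}K(B)$ with $K(B):=\sum_{D_B\ \mathrm{connected\ on}\ B}\prod_{e\in D_B}G_N(x_{e_+}-x_{e_-})$. Comparing this --- applied to every sub-collection of the variables, not just the full one --- with the moment--cumulant relation $\E[Z_1\cdots Z_n]=\sum_{\pi}\prod_{B\in\pi}\kappa((Z_i)_{i\in B})$, and using that the latter determines the joint cumulants uniquely (it is triangular, equivalently one inverts on the partition lattice), gives $\kappa(X(x_1),\dots,X(x_k),Y(x_{k+1}),\dots,Y(x_n)) = K(\{1,\dots,n\})$, i.e.\ the sum over \emph{connected} Feynman diagrams. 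Integrating over $\Lambda^n$ and recalling $\Pi_N(\Gamma)=\int_{\Lambda^{\cV}}\prod_{e\in\cE}G_N$, the right-hand side becomes a sum of $\Pi_N$ over the multigraphs underlying these matchings, which is the claimed $\sum_{\Gamma\in\cG}\Pi_N\Gamma$.

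The main obstacle is not any single step but the careful bookkeeping of combinatorial factors: one must pass cleanly between matchings of labelled legs and their underlying multigraphs, verify that the resulting graphs are exactly the connected self-loop-free multigraphs with four half-edges at the first $k$ vertices and two at the remaining $n-k$ (equivalently, those arising from pairwise matchings), and be explicit that $\sum_{\Gamma\in\cG}$ is read with each multigraph weighted by the number of matchings producing it. A secondary point worth stating is that, unlike later in the paper, no analytic estimate is needed here: $N$ is fixed throughout and everything is a finite Gaussian computation. The remaining ingredients --- Wick's theorem and the inversion of the moment--cumulant relation --- are classical, see~\cite{Peccati_Taqqu_book}.
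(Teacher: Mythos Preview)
Your proposal is correct and follows essentially the same approach as the paper: both reduce the statement to the diagram formula of Peccati--Taqqu~\cite{Peccati_Taqqu_book} for cumulants of multiple Wiener--It\^o integrals, with the no-self-loop constraint coming from Wick ordering and the edge weight $G_N$ arising from $\tilde G_N * \tilde G_N = G_N$. The paper invokes~\cite[Coro.~7.3.1]{Peccati_Taqqu_book} directly and works a single example in detail, whereas you unpack that corollary by first writing the moment formula and then passing to cumulants via the moment--cumulant inversion on the partition lattice; this is the same content, only slightly more self-contained on your side.
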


We will not introduce all the terminology in the previous statement abstractly, 
but rather illustrate it in a specific case.
As the reader will see, the arguments easily generalise to all combinations 
of~$k$ and~$n$ and thus lead to a \enquote{proof by example.}
\begin{remark}
Essentially, the previous theorem is a direct consequence 
of~\cite[Coro.~$7.3.1$]{Peccati_Taqqu_book}, a ge\-ne\-ra\-li\-sa\-tion of 
Wick's theorem followed by a projection onto~\enquote{connected diagrams} to 
account for the cumulant.
Since the whole book~\cite{Peccati_Taqqu_book} is written in the language of 
set-partition combinatorics, we hope our example aids the reader in seeing the 
connections clearly.
\end{remark}

\begin{example}[$k=2, n=3$: \enquote{Proof} of Theorem~\ref{thm:cumu_graphs}]
\label{ex:PT} 
We consider
\begin{align}
X(x_1) & = \FGfourlabel{x_1} =   
I_4(\underbrace{\tilde{G}_N(x_1-\cdot)^{\otimes 
		4}}_{=: f_1(x_1;\;\cdot)}) \quad \rightarrowtriangle \quad n_1 = 4\;, \\
X(x_2) & = \FGfourlabel{x_2} =  I_4(\underbrace{\tilde{G}_N(x_2-\cdot)^{\otimes 
		4}}_{=: f_2(x_2;\;\cdot)}) \quad \rightarrowtriangle \quad n_2 = 4\;, \\
Y(x_3) & = \FGtwolabel{x_3} =  I_2(\underbrace{\tilde{G}_N(x_3-\cdot)^{\otimes 
		2}}_{=: f_3(x_3;\;\cdot)}) \quad \rightarrowtriangle \quad n_3 = 2\;.
\end{align}
\begin{enumerate}[label=(\arabic*)]
\item In a first step, we convert the legs of the diagrams~$X(x_i)$, $i=1,2$, 
and~$Y(x_3)$ into nodes and keep the 
label~$x_i$ on the left side of the row. 
Accordingly, we have $n= \sum_{i=1}^3 n_i = 10$ nodes in the following diagram: 
\begin{center}
	\begin{tikzpicture}  
		[scale=.8,auto=center,dot/.style={circle,draw=black,fill=black!20}] 
		
		\node (g1) at (-3,6) {$\FGfourlabel{x_1} \quad \rightarrowtriangle \quad$};  
		\node (g2) at (-3,4.5) {$\FGfourlabel{x_2} \quad \rightarrowtriangle \quad$};  
		\node (g3) at (-3,3) {$\FGtwolabel{x_3} \quad \rightarrowtriangle \quad$};  
		\node[dot, label={[label distance=1.5mm] west:$x_1$}] (a1) at (0,6) {};  
		\node[dot] (a2) at (2,6)  {}; 
		\node[dot] (a3) at (4,6)  {};  
		\node[dot] (a4) at (6,6) {};  
		\node[dot, label={[label distance=1.5mm] west:$x_2$}] (a5) at (0,4.5)  {};  
		\node[dot] (a6) at (2,4.5)  {};  
		\node[dot] (a7) at (4,4.5)  {};  
		\node[dot] (a8) at (6,4.5)  {};  
		\node[dot, label={[label distance=1.5mm]  west:$x_3$}] (a9) at (0,3)  {};  
		\node[dot] (a10) at (2,3)  {};  
		
	\end{tikzpicture} 
\end{center}
\item We form pairwise matchings of these nodes, signified by 
lines between two nodes, abi\-ding by the following rules:
\begin{enumerate}[label=(\roman*)]
	\item One must not match two nodes that are in the same 
	row. This would correspond to self-loops in the associated graphs. Peccati and 
	Taqqu call these matchings \enquote{non-flat.}
	\item The resulting matchings must be such that one cannot 
	divide the rows without intersecting one line that symbolises a matching of two 
	nodes. 
	Otherwise, one could partition the rows into two or more 
	subsets of rows and form pairwise matchings within each subset. This would 
	correspond to disconnected graphs.
\end{enumerate}
The set that contains all of these matchings is 
called~$\mathcal{M}_2([n],\pi^\star)$ by Peccati and Taqqu, see 
point~\ref{it:part_pistar} below for the definition of~$\pi^\star$ in our 
context. 

We denote the specific matching in the following diagram by~$\sigma$: 
\begin{center}
	\begin{tikzpicture}  
		[scale=.8,auto=center,dot/.style={circle,draw=black,fill=black!20}, 
		zero/.style={circle,radius=0pt,draw=black,fill=black!100, inner sep=0pt, outer 
			sep=0pt, minimum size=.0cm}]] 
		forms 
		
		\node[dot, label={[label distance=1.5mm] west:$x_1$}] (a1) at (0,6) {};  
		\node[dot] (a2) at (2,6)  {}; 
		\node[dot] (a3) at (4,6)  {};  
		\node[dot] (a4) at (6,6) {};  
		\node[dot, label={[label distance=1.5mm] west:$x_2$}] (a5) at (0,4.5)  {};  
		\node[dot] (a6) at (2,4.5)  {};  
		\node[dot] (a7) at (4,4.5)  {};  
		\node[dot] (a8) at (6,4.5)  {};  
		\node[dot, label={[label distance=1.5mm]  west:$x_3$}] (a9) at (0,3)  {};  
		\node[dot] (a10) at (2,3)  {};

		\draw[red] (a1) -- (a10);
		\draw[red] (a2) -- (a5);  
		\draw[red] (a3) -- (a7);  
		\draw[red] (a4) -- (a8);  
		\draw[red] (a6) -- (a9);  
		
		
		\node[zero, label={[label distance=1.5mm] \textcolor{red}{$1$}}] (1) at (0,5) 
		{};
		\node[zero, label={[label distance=1.5mm] \textcolor{red}{$2$}}] (2) at (1.8,5) 
		{};
		\node[zero, label={[label distance=1.5mm] \textcolor{red}{$3$}}] (3) at 
		(0.8,2.8) {};
		\node[zero, label={[label distance=1.5mm] \textcolor{red}{$4$}}] (4) at 
		(4.3,4.75) {};
		\node[zero, label={[label distance=1.5mm] \textcolor{red}{$5$}}] (5) at 
		(6.3,4.75) {};
		
	\end{tikzpicture} 
\end{center}
\item One converts back these matchings into pairings of the 
legs of the variables~$X(x_1)$, $X(x_2)$, and~$Y(x_3)$. For the specific 
matching $\sigma$ above, this leads to the following multigraph (or 
\emph{vacuum Feynman diagram}): 
\begin{equation}
	\raisebox{-12mm}{
		\begin{tikzpicture}  
			[scale=.6,auto=center,dot/.style={circle,draw=black,fill=black!100, inner 
				sep=1.5pt, outer sep=0pt}, 
			zero/.style={circle,radius=0pt,draw=black,fill=black!100, inner sep=0pt, outer 
				sep=0pt, minimum size=.0cm}] 
			
			\path[use as bounding box] (-2.6,1.1) rectangle (8.1,7.6);

			\node[dot, label={[label distance=1.5mm] west:$x_1$}] (a1) at (0,3) {};  
			\node[dot, label={[label distance=1.5mm] west:$x_2$}] (a2) at (4,3)  {}; 
			\node[dot, label={[label distance=1.5mm] west:$x_3$}] (a3) at (8,3)  {};  
			\node[zero] (1ol) at (-1,4) {};
			\node[zero] (1or) at (1,4) {};
			\node[zero] (1ul) at (-1,2) {};
			\node[zero] (1ur) at (1,2) {};

			\draw[thick] (1ol) -- (a1);
			\draw[thick] (1or) -- (a1);
			\draw[thick] (1ul) -- (a1);
			\draw[thick] (1ur) -- (a1);
			\node[zero] (2ol) at (3,4) {};
			\node[zero] (2or) at (5,4) {};
			\node[zero] (2ul) at (3,2) {};
			\node[zero] (2ur) at (5,2) {};
			\draw[thick] (2ol) -- (a2);
			\draw[thick] (2or) -- (a2);
			\draw[thick] (2ul) -- (a2);
			\draw[thick] (2ur) -- (a2);
			\node[zero] (3o) at (8,4) {};
			\node[zero] (3u) at (8,2) {};
			\draw[thick] (3u) -- (a3);
			\draw[thick] (3o) -- (a3);
			
			
			\draw[red, thick] (1or) to[out=45,in=135] (2ol);
			\draw[red, thick] (1ur) to[out=-45,in=-135] (2ul);
			\draw[red, thick] (2ur) to[out=-45,in=-90] (3u);
			\draw[red, thick] (1ol) to[out=135,in=45] (2or);
			\draw[red, thick] (1ul) to[out=225,in=-90] (-2.5,4) to[out=90,in=90] (3o); 
			
			
			\node[zero, label={[label distance=1.5mm] \textcolor{red}{$4$}}] (4) at (2,3.2) 
			{};
			\node[zero, label={[label distance=1.5mm] \textcolor{red}{$5$}}] (5) at (2,1.4) 
			{};
			\node[zero, label={[label distance=1.5mm] \textcolor{red}{$2$}}] (2) at (2,5) 
			{};
			\node[zero, label={[label distance=1.5mm] \textcolor{red}{$1$}}] (1) at (2,6.8) 
			{};
			\node[zero, label={[label distance=1.5mm] \textcolor{red}{$3$}}] (3) at (6.5,1) 
			{};
			
			\drawbox;
		\end{tikzpicture} 
	}
	\qquad\equiv\qquad
	\raisebox{-3mm}{
		\begin{tikzpicture}[scale=2]
			\path[use as bounding box] (-0.3, -0.2) rectangle (0.95,0.8);
			\draw[semithick] (0,0)--(0.65,0);
			\draw[semithick] (0,0) edge [out=60,in=120] (0.65,0);
			\draw[semithick] (0,0) edge [out=-60,in=-120] (0.65,0);
			\draw[semithick] (0,0) edge [out=80,in=-160] (0.325,0.5);
			\draw[semithick] (0.65,0) edge [out=100,in=-20] (0.325,0.5);
			\TreeVertex{0}{0};
			\TreeVertex{0.65}{0};
			\TreeVertex{0.325}{0.5};
			\node[] at (-0.2,-0.05) {\footnotesize{$x_1$}};
			\node[] at (0.83,-0.05) {\footnotesize{$x_2$}};
			\node[] at (0.32,0.66) {\footnotesize{$x_3$}};
			\node[] at (0,0.35) {\footnotesize{$\textcolor{gray}{1}$}};
			\node[] at (0.325,0.25) {\footnotesize{$\textcolor{gray}{2}$}};
			\node[] at (0.65,0.35) {\footnotesize{$\textcolor{gray}{3}$}};
			\node[] at (0.325,0.08) {\footnotesize{$\textcolor{gray}{4}$}};
			\node[] at (0.325,-0.25) {\footnotesize{$\textcolor{gray}{5}$}};
			\drawbox;
		\end{tikzpicture}
	}
\end{equation}
\item \label{it:part_pistar}There is an analytical expression corresponding to 
this multi-graph respectively the mat\-ching that leads to it. Let us 
illustrate how this can be obtained using the result of Peccati and 
Taqqu~\cite[Coro.~$7.3.1$]{Peccati_Taqqu_book}.

\begin{enumerate}[label=(4.\arabic*)]
	\item Note that
	\begin{equation}
		\biggpar{\bigotimes_{k=1}^3 f_k(x_k,\cdot)}(z_{1:n})
		=
		f_1(x_1,z_{1:4}) f_2(x_2,z_{5:8}) f_3(x_3,z_{9:10})\;.
		\label{eq:fct_tp}
	\end{equation}
	The partition~$\pi^\star$ in Peccati's and Taqqu's book corresponds exactly to 
	the partitioning of the indices of the~$z_i$'s:
	\begin{align}
		\pi^\star  
		& = \bigset{\{1,\ldots,n_1\}, \{n_1+1,\ldots,n_1+n_2\}, 
			\{n_1+n_2+1,\ldots,n_1+n_2+n_3\}} \\
		& = \bigset{\{1,\ldots,4\}, \{5,\ldots,8\},\{9,10\}}\;.
	\end{align}
	\item 
	The function~$f_{\sigma,\ell} \equiv f_{\sigma,3}$
	built from~\eqref{eq:fct_tp} is obtained by simply identifying 
	the variables that are matched via~$\sigma$.
	In precise terms, this means that
	\begin{align}
		f_{\sigma,3}(z_{1:5})
		{}={}& 
		\tilde{G}_N(x_1-z_1)\tilde{G}_N(x_1-z_2)
		\tilde{G}_N(x_1-z_4) \tilde{G}_N(x_1-z_5) \\
		{}\times{}& \tilde{G}_N(x_2-z_2)\tilde{G}_N(x_2-z_3)
		\tilde{G}_N(x_2-z_4)\tilde{G}_N(x_2-z_5) \\
		{}\times{}& \tilde{G}_N(x_3-z_1)\tilde{G}_N(x_3-z_3)\;.
	\end{align} 
	\item We then find
	\begin{equation}
		\int_{\Lambda^5} f_{\sigma,3}(z_{1:5}) \6 z_{1:5}
		=
		G_N(x_2-x_1)^3 G_N(x_3-x_1) G_N(x_3 - x_2)
		=
		\FGIIIpluslabel{x_1}{x_2}{x_3}
	\end{equation}
	by repeatedly using the identity $(\tilde{G}_N * \tilde{G}_N)(x) 
	= G_N(x)$.
	\item Integrating over all the~$x_i$'s (above called the 
	\emph{valuation}~$\Pi_N$) we then find
	\begin{align}
		& \thinspace
		\int_{\Lambda^3} \int_{\Lambda^5} f_{\sigma,3}(z_{1:5}) 
		\6 z_{1:5} \6 x_{1:3} \\
		= & \
		\int_{\Lambda^3} G_N(x_2-x_1)^3 G_N(x_3-x_1) G_N(x_3 - 
		x_2)  \6 x_{1:3} \\
		= & \
		\int_{\Lambda^{\abs{\cV}}} \prod_{e \in \cE} G_N(x_{e_+} 
		- x_{e_-})  \6 x
		=
		\Pi_N \Gamma
	\end{align}
	for
	\begin{equation}
		\Gamma
		=
		\FGIIIplus\;.
	\end{equation}
	In the literature, the previous drawing sometimes sym\-bo\-li\-ses~$\Gamma \in 
	\cG$ and sometimes denotes~$\Pi_N \Gamma \in \R$. 
	We follow the former convention.
\end{enumerate}
\end{enumerate}
The above procedure clearly generalises to arbitrary values of~$n$ and~$k$ 
and to different matchings that produce~$\Gamma \in \cG = \cG(k,n)$. 
Therefore, Theorem~\ref{thm:cumu_graphs} follows by the same route.
\end{example}


\subsection{BPHZ renormalisation}
\label{ssec:BPHZ}

We now examine the cumulant expansion~\eqref{eq:cumulant} in more detail.
Each coefficient $A_{nm}$ and $B_{nm}$ is a linear combination of Feynman 
diagrams $\Gamma^{(k)}_{nm}$ having each 
\begin{itemize}
\item $m$ vertices of degree $4$,
\item $n-m$ vertices of degree $2$, 
\item $n+m$ edges.
\end{itemize}
We associate with a multigraph $\Gamma = (\cV, \cE)$ a degree given by 
\begin{equation}
\deg(\Gamma) = 3(\abs{\cV}-1) - \abs{\cE}\;,
\end{equation} 
so that $\deg(\Gamma^{(k)}_{nm}) = 2n - m - 3$ for all $k$. We call a 
diagram $\Gamma$ \emph{divergent} if $\deg\Gamma\leqs0$. 

For small divergent diagrams, one can check that their value diverges 
like $N^{-\deg(\Gamma)}$, possibly with logarithmic corrections. This is 
however not true for many larger diagrams, because of the presence of divergent 
subdiagrams. In fact, there is only one possible divergent subdiagram in our 
situation, namely the \lq\lq bubble\rq\rq\ 
\begin{equation}
\FGIII\;,
\end{equation}
the value of which diverges like $\log N$. 

BPHZ renormalisation, named after Bogoliubov, Parasiuk, Hepp and 
Zimmermann~\cite{Bogoliubow_Parasiuk,Hepp66,Zimmermann69} provides a way of 
dealing with these divergent subdiagrams. It can be formulated in a convenient 
way by using the Connes--Kreimer extraction-contraction coproduct on graphs
\cite{Connes_Kreimer_2000a,Connes_Kreimer_2001}, given by
\begin{equation}
\label{eq:ec_coproduct} 
\Delta(\Gamma) =  \Gamma\otimes\unit + \unit\otimes\Gamma 
+ \sum_{\unit\neq\overline\Gamma\subsetneq\Gamma} 
\overline\Gamma\otimes(\Gamma/\overline\Gamma)\;, 
\end{equation} 
where $\unit$ denotes the empty graph, the sum ranges over all divergent 
subdiagrams, and $\Gamma/\overline\Gamma$ is obtained by contracting all edges 
in $\overline\Gamma$ to one vertex. We further define a (twisted) antipode 
$\cA$ 
as the linear map satisfying $\cA(\unit) = \unit$, and extended inductively by 
\begin{align}
\cA(\Gamma) 
&= -\cM(\cA\otimes\id)(\Delta\Gamma - \Gamma\otimes\unit) \\
&= -\Gamma - \sum_{\unit\neq\overline\Gamma\subsetneq\Gamma} 
\cA(\overline\Gamma)\cdot(\Gamma/\overline\Gamma)\;, 
\label{eq:antipode} 
\end{align} 
where the product $\cdot$ denotes the disjoint union of graphs, and 
$\cM(\Gamma_1\otimes\Gamma_2) = \Gamma_1\cdot\Gamma_2$. 
For instance, 
\begin{align}
\Delta\Bigpar{\FGIIIplus} &= \FGIIIplus\otimes\unit 
+ \unit\otimes\FGIIIplus + \FGIII \otimes \FGII\;,\\
\cA\Bigpar{\FGIIIplus} &= - \FGIIIplus + \FGIII \cdot \FGII\;.
\label{eq:antipode_FGIIIplus} 
\end{align}
A \emph{character} is by definition a linear form $g:\cG\to\R$, 
$\Gamma\mapsto\pscal{g}{\Gamma}$ which is also multiplicative, in the sense 
that 
\begin{equation}
\pscal{g}{\Gamma_1\cdot\Gamma_2} = 
\pscal{g}{\Gamma_1}\pscal{g}{\Gamma_2}
\qquad \forall\Gamma_1, \Gamma_2\in\cG\;.
\end{equation} 
To such a character, we can associate a renormalisation transformation, given 
by the linear map $M^g: \cG\to\cG$ defined by  
\begin{equation}
M^g(\Gamma) := (g\otimes\id)\Delta\Gamma\;.
\end{equation} 
The \emph{BPHZ character} is given by 
\begin{equation}
\pscal{g^{\text{BPHZ}}}{\Gamma} := 
\begin{cases}
\Pi_N\cA(\Gamma) & \text{if $\deg\Gamma\leqs0$\;,} \\
0 & \text{otherwise\;.}
\end{cases}
\end{equation}
We then define 
\begin{align}
\label{eq:PiNBPHZ} 
\PiNBPHZ(\Gamma) 
&= \Pi_N M^{g^{\text{BPHZ}}} (\Gamma) \\
&= (g^{\text{BPHZ}}\otimes\Pi_N) \Delta\Gamma\;. 
\end{align} 
A compact way of writing this is to introduce the map $\tilde\cA$ defined by 
\begin{equation} \label{eq:tildeA}
\tilde\cA(\Gamma) := \cA(\Gamma) 1_{\deg\Gamma\leqs0}\;,
\end{equation} 
which implies 
\begin{align}
\label{eq:PiNBPHZ} 
\PiNBPHZ(\Gamma) 
&= (\Pi_N\otimes\Pi_N)(\tilde\cA\otimes\id) \Delta\Gamma \\
&= (\Pi_N\tilde\cA\otimes\Pi_N)\Delta\Gamma\;. 
\end{align} 
The following commutative diagram summarises the situation: 
\begin{equation}
\begin{tikzcd}[column sep=large, row sep=large]
\displaystyle \cG
\arrow[r, "\Pi_N"]
\arrow[d, "(\tilde\cA\otimes\id)\Delta"'] 
\arrow[rd, "\PiNBPHZ"]
\arrow[dd, out=180, in=180, distance=50pt, 
" M^{g^{{\text{BPHZ}}}} = (\Pi_N\tilde\cA\otimes\id)\Delta"']
& \R  \\
\displaystyle \cG\otimes\cG
\arrow[r, "\quad\Pi_N\otimes\Pi_N\qquad"] 
\arrow[d, "\Pi_N\otimes\id"'] 
& \R 
\\
\cG
\arrow[ru, "\Pi_N"']
\end{tikzcd}
\end{equation}

The interest of this construction is that one can show that $\PiNBPHZ(\Gamma)$ 
is bounded uniformly in $N$ if $\Gamma$ is non-divergent, and otherwise 
diverges 
like $N^{-\deg(\Gamma)}$, possibly with logarithmic 
corrections~\cite{Hairer_BPHZ,Berglund_Bruned_19}. 

The aim of the remainder of this section is to give a mostly algebraic proof of 
the following combinatorial result. 

\begin{theorem} \label{thm:cumu_expansion}
\label{thm:combinatorial} 
For $0 \leqs m \leqs n$, write 
\begin{equation}
B_{nm} = \sum_k b^{(k)}_{nm} \Pi_N\bigpar{\Gamma^{(k)}_{nm}}\;, 
\end{equation} 
where each sum runs over finitely many $k$, the $b^{(k)}_{nm}$ are 
combinatorial coefficients, and each $\Gamma^{(k)}_{nm}$ is a connected Feynman 
diagram with $m$ vertices of degree $4$, $n-m$ vertices of degree $2$, and 
$n+m$ edges. Then 
\begin{equation}
\label{eq:conj2} 
\sum_{n=2}^\infty \frac{\kappa_n}{n!}
= -\sum_{p=2}^\infty \frac{1}{p!} (-\alpha)^p 
\sum_k b^{(k)}_{pp} \PiNBPHZ\bigpar{\Gamma^{(k)}_{pp}}\;,
\end{equation} 
where equality is in the sense of formal power series. 
\end{theorem}

\begin{corollary}
\label{cor:combinatorial}  
All terms in the perturbative cumulant expansion~\eqref{eq:cumulant} are bounded 
uniformly in the cut-off $N$. 
\end{corollary}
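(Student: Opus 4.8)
The plan is to read Corollary~\ref{cor:combinatorial} off Theorem~\ref{thm:combinatorial} together with the quoted analytic bounds on BPHZ-renormalised diagrams, separating the generic orders $p\geqs4$ from the two special orders $p\in\{2,3\}$ at which the counterterm $\gamma=\eps^2C_N^{(3)}-\eps^3C_N^{(4)}$ is supported. Here a ``term of the expansion'' means the coefficient of $\eps^p$ in the formal power series $-\log Z_N(\eps)=\gamma-\sum_{n\geqs2}\kappa_n/n!$.

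First I would handle $p\geqs4$, where $\gamma$ plays no role. By Theorem~\ref{thm:combinatorial} the coefficient of $\eps^p$ in $-\log Z_N$ is then a fixed rational multiple (coming from $\tfrac1{p!}$ and the powers of $\alpha=\eps/4$) of the \emph{finite} sum $\sum_k b^{(k)}_{pp}\,\PiNBPHZ\bigpar{\Gamma^{(k)}_{pp}}$. Each $\Gamma^{(k)}_{pp}$ is connected with $p$ vertices of degree $4$ and $2p$ edges, hence $\deg(\Gamma^{(k)}_{pp})=3(p-1)-2p=p-3\geqs1>0$, so it is non-divergent; by the stated bounds~\cite{Hairer_BPHZ,Berglund_Bruned_19} the quantities $\PiNBPHZ\bigpar{\Gamma^{(k)}_{pp}}$ are then bounded uniformly in $N$, and therefore so is the coefficient of $\eps^p$. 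The point of going through Theorem~\ref{thm:combinatorial} rather than bounding things directly is precisely that the bare values $\Pi_N\bigpar{\Gamma^{(k)}_{pp}}$ may still diverge (through bubble subdiagrams): it is genuinely the BPHZ-subtracted values that one controls.

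For $p\in\{2,3\}$ I would argue elementarily from the cumulant expansion itself, without invoking Theorem~\ref{thm:combinatorial}. At order $\eps^p$ a summand $\binom nm\alpha^m\beta^{n-m}B_{nm}$ of $\kappa_n$ contributes iff $2n-m=p$ (because $\alpha\propto\eps$ and $\beta=\tfrac12\eps^2C_N^{(2)}\propto\eps^2$), and since $B_{21}=\kappa(X,Y)=0$ — visible from the absence of an $\alpha\beta$-term in $\mu_2$ — the only surviving contribution is the $m=p$ term $(-1)^p\alpha^pB_{pp}$ of $\kappa_p$. Thus the coefficient of $\eps^p$ in $-\log Z_N$ equals $[\eps^p]\gamma-\tfrac{(-1)^p}{p!}[\eps^p]\bigpar{\alpha^p B_{pp}}$; inserting $B_{22}=4!\,\Pi_N\FGIV$, $B_{33}=\binom{4}{2}^3 2^3\,\Pi_N\FGVI$ (read off from the expressions for $\mu_2,\mu_3$) and $C_N^{(3)}=\tfrac{4!}{2!4^2}\Pi_N\FGIV$, $C_N^{(4)}=\tfrac{2^3}{3!4^3}\binom{4}{2}^3\,\Pi_N\FGVI$ from~\eqref{eq:counterterms}, a one-line computation (using $\alpha=\eps/4$) shows that this coefficient vanishes identically — which is exactly the statement that $C_N^{(3)},C_N^{(4)}$ were tuned to kill the overall divergences of $\FGIV$ and of $\FGVI$. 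Combining the two cases gives the Corollary. The deduction presents no real obstacle: all the substance lies in Theorem~\ref{thm:combinatorial}, whose role is to trade the individually $N$-divergent contributions of the several cumulants $\kappa_n$ (each weighted by powers of the $\log N$-divergent bubble counterterm $\beta$) for the single superficially convergent, BPHZ-renormalised family $\{\Gamma^{(k)}_{pp}\}_k$; if one also wished to reprove that theorem, the delicate point would be matching the combinatorial symmetry factors in the forest-formula reorganisation, but that is taken as given here.
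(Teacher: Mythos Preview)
Your proof is correct and follows essentially the same line as the paper's: invoke Theorem~\ref{thm:combinatorial}, use $\deg(\Gamma^{(k)}_{pp})=p-3>0$ together with the quoted BPHZ bounds for $p\geqs4$, and let $\gamma$ absorb the $p\in\{2,3\}$ divergences. The only difference is that for $p\in\{2,3\}$ you verify the cancellation by an explicit computation from the cumulant expansion (checking $B_{21}=0$ and matching $B_{22},B_{33}$ against $C_N^{(3)},C_N^{(4)}$), whereas the paper simply asserts in one sentence that ``the choice of $\gamma$ precisely compensates these two terms''; this is a harmless elaboration that makes the low-order cancellation more transparent.
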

\begin{proof}
Since $\deg(\Gamma^{(k)}_{pp}) = p-3$, the only divergent 
$\Gamma^{(k)}_{pp}$ are those with $p\in\set{2,3}$. The choice of $\gamma$ 
in~\eqref{eq:alpha_beta_gamma} precisely compensates these two terms. The 
result follows at once. 
\end{proof}

\subsection{Zimmermann's forest formula}
\label{ssec:Zimmermann} 

A first step in proving Theorem~\ref{thm:combinatorial} is to obtain a simpler 
expression than~\eqref{eq:antipode} for the twisted antipode. This is provided 
by Zimmermann's forest formula~\cite{Zimmermann69,Hairer_BPHZ}, which reads 
\begin{equation}
\cA(\Gamma) = - \sum_{\cF} (-1)^{\abs{\cF}} \cC_\cF\Gamma\;.
\end{equation} 
Here the sum ranges over all forests $\cF$ not containing $\Gamma$, where a 
\emph{forest} is a set of subgraphs of $\Gamma$ which are pairwise either 
included in one another, or vertex-disjoint. The operator $\cC_\cF$ extracts 
all subgraphs in $\cF$ from $\Gamma$ (for a forest, this operation is 
independent of the order of the elements of $\cF$). 

In our simple situation, forests are just unions of disjoint bubbles. The 
forest formula thus takes the following form: if $\Gamma$ contains $g$ bubbles, 
then 
\begin{equation}
\cA(\Gamma) = - \sum_{S\subset\set{1,\dots,g}} 
(-1)^{\abs{S}} \FGIII^{\abs{S}} \cC_S \Gamma\;,
\end{equation} 
where we write $\cC_S$ for the operation consisting in contracting all bubbles 
labelled by an element of $S$ (for an arbitrary fixed labelling of the 
bubbles). 

As a result, using the fact that 
\begin{equation}
\label{eq:bubble}
\Pi_N \Bigpar{\FGIII} = \frac{\beta}{3\eps^2}
= \frac{\beta}{48\alpha^2}\;,
\end{equation} 
we obtain 
\begin{equation}
\PiNBPHZ\bigpar{\Gamma^{(k)}_{pp}}
= - \sum_{S\subset\set{1,\dots,g}} 
\Biggpar{-\frac{\beta}{48\alpha^2}}^{\abs{S}} 
\Pi_N\bigpar{\cC_S \Gamma^{(k)}_{pp}}\;,
\end{equation} 
where $\cC_S \Gamma^{(k)}_{pp}$ is a diagram with $p-\abs{S}$ vertices, 
$\abs{S}$ of which are of degree $2$.


\subsection{A Hopf-algebra-flavoured proof}
\label{ssec:Hopf}

As we have already alluded to in the last sections, it is well-known that the 
triple~$(\cG,\cdot,\Delta)$ can canonically be turned into a Hopf algebra, 
itself isomorphic to the \emph{Connes--Kreimer Hopf 
algebra}~\cite{ConnesKreimer98} of rooted trees.
Therefore, it is natural to expect that a particularly elegant proof of 
Theorem~\ref{thm:combinatorial} may be achieved if one were to interpret $X = 
\FGfour$ and~$Y = \FGtwo$ as \emph{monomials} in a polynomial Hopf algebra.
Indeed, such a construction has been performed by Ebrahimi-Fard et 
al.~\cite{EFPTZ18}: In this section, we describe how to adapt it for our 
purposes.\footnote{Note that~$X$ and~$Y$ do satisfy the assumptions in that 
article: They have moments of all orders because probabilistic~$L^p$-norms 
coincide in all homogeneous Wiener-It\^{o} 
chaoses~\cite[Thm.~$3.50$]{Janson_book_08}.}

We let~$\X := (X,Y)$ and for each~$\bn =(n_1,n_2) \in \N^2$ then 
define 
\begin{equation}
\X^{\bn} := X^{n_1}Y^{n_2}\;, \quad 
\X^{(0,0)} := \bs{1}\;, \quad
H := \operatorname{span}\bigset{\X^{\bn}: \ \bn \in \N^2}\;.
\end{equation}

\begin{remark}
In~\cite{EFPTZ18}, the authors allow for~$\X = (X_a: \ a \in 
I)$ where~$I$ is some (possibly infinite) index set and then consider the set
\begin{equation}
M(I) = \bigset{B: I \to \N: \ B(a) \neq 0 \ \text{for 
		finitely many} \ a \in \cA}\;.
\end{equation}   
In our setting~$\abs{I} = 2$, so we can identify~$M(I) \simeq \N^2$, 
i.e. we just consider~$B = \bn \in \N^2$.
In particular, we still consider polynomials~$\X^{\bn}$ 
as above and, contrary to~\cite{EFPTZ18}, do not base our analysis on the 
(isomorphic) vector space freely generated from~$M(I)$.

We repeat some of the constructions in~\cite{EFPTZ18}, albeit in a way slightly 
adapted to our setting. For~$\bn,\bm, \bk \in \N_0^2$, we define
\begin{align} 
\cM(\X^{\bn} \otimes \X^{\bm})
& := \X^{\bn} \cdot \X^{\bm} 
:= \X^{\bn \cdot \bm}, \quad \bk \cdot \bm 
:= (k_1 + m_1, k_2 + m_2),
\\
\hat{\Delta} \X^{\bn}  & :=
\sum_{\substack{\bk,\bm \in \N_0^2 \\ \bk \cdot 
		\bm = \bn}} \binom{\bn}{\bm,\bk} \X^{\bk} \otimes \X^{\bm}, 
\quad \binom{\bn}{\bm,\bk} := \prod_{i=1}^2 \frac{n_i!}{m_i! k_i!}\;,
\end{align}
with a slight abuse of notation for the product~$\cdot$\thinspace. 
It is proved in~\cite{EFPTZ18} that~$\Delta$ defines a coproduct 
on~$H$. 
\end{remark}
In the previous sections, we have seen that the (twisted) 
antipode~$\cA$~(resp.~$\tilde{\cA}$) plays a crucial role in the renormalisation 
of multigraphs~$\Gamma \in \cG$.
In this section, we want to define a corresponding map~$\hat{\cA}_\eta$ acting 
on~$H$, as well as a map~$\cP$ that sends~$H$ to~$\cG$, such that the two diagrams in figure~\ref{fig:commutative_diagrams} between spaces respectively objects commute.

\begin{landscape}
\begin{figure}[h]
\centering
\begin{tikzcd}[column sep=large, row sep=large]
	H 
	\arrow[r, "\cP"] 
	\arrow[d, "\chi_\eta = 
	(\hat\cA_\eta\otimes\id)\hat\Delta"']
	& \displaystyle \cG
	\arrow[r, "\Pi_N"]
	\arrow[d, "(\tilde\cA\otimes\id)\Delta"'] 
	\arrow[rd, "\PiNBPHZ"]
	& \R  \\
	H\otimes H
	\arrow[d, "\cM"'] 
	& \displaystyle \cG\otimes\cG
	\arrow[r, "\quad\Pi_N\otimes\Pi_N\qquad"] 
	\arrow[d, "\Pi_N \otimes \id"'] 
	& \R 
	\\
	H
	\arrow[r, "\cP"] 
	& \cG
	\arrow[ru, "\Pi_N"']
\end{tikzcd}

\vspace{3em}

\begin{tikzcd}[cramped,column sep=large, row sep=large]
	\e^{-\alpha X} 
	\arrow[r, "\cP", mapsto] 
	\arrow[d, "\chi_\eta = 
	(\hat\cA_\eta\otimes\id)\hat\Delta"', 
	mapsto]
	& \displaystyle \sum_{n,k} \tfrac{(-\alpha)^n}{n!} b_{nn}^{(k)} 
	\Gamma_{nn}^{(k)} 
	\arrow[r, "\Pi_N", mapsto]
	\arrow[d, "(\tilde\cA\otimes\id)\Delta"', mapsto] 
	\arrow[rd, "\PiNBPHZ", mapsto]
	& \displaystyle \sum_{n} \tfrac{(-\alpha)^n}{n!} B_{nn}
	\\
	\displaystyle
	\sum_{n,m} \tfrac{(-\alpha)^n(-\eta)^m}{(n-2m)!m!} Y^m\otimes X^{n-2m}
	\arrow[d, "\cM"', mapsto] 
	& \displaystyle \sum_{n,k,S} 
	\tfrac{(-\alpha)^n}{n!} b_{nn}^{(k)}
	\tilde\cA\bigpar{\raisebox{2mm}{\scalebox{0.7}{\FGIII}}^{\abs{S}}} 
	\otimes \cC_S \Gamma_{nn}^{(k)}
	\arrow[r, "\Pi_N\otimes\Pi_N", mapsto] 
	\arrow[d, "\Pi_N \otimes \id"', mapsto] 
	& \displaystyle \sum_{n} \tfrac{(-\alpha)^{n-m}(-\beta)^m}{(n-m)!m!} B_{nm} 
	\\
	\e^{-\alpha X - \beta Y}
	\arrow[r, "\cP", mapsto] 
	& \displaystyle \sum_{n,m,k} 
	\tfrac{(-\alpha)^{n-m}(-\beta)^m}{m!(n-m)!} 
	b_{nm}^{(k)} 
	\Gamma_{nm}^{(k)} 
	\arrow[ru, "\Pi_N"', mapsto]
\end{tikzcd}
\vspace{1em}
\caption{Commutative diagrams between spaces (top) and objects (bottom).}
\label{fig:commutative_diagrams}
\end{figure}
\end{landscape}

The following definition introduces the desired maps~$\hat{\cA}_\eta$ and~$\cP$.
Recall that
\begin{equation} \label{eq:double_factorial}
(2\ell-1)!! 
:= \prod_{i=1}^{\ell} (2i-1)
= \frac{(2\ell)!}{2^\ell \ell!}\;.
\end{equation} 

\begin{definition} \label{def:antipode_2}
We define the linear map~$\hat{\cA}_\eta: H \to H$ by 
\begin{equation}
\label{eq:antipode2}
\hat{\cA}_\eta \X^{\bn} := 
\begin{cases}
	(2\ell-1)!! (-2 \eta Y)^{\ell} & \quad 
	\text{if} \quad n_1 = 2\ell, \ n_2 = 0\;,\\
	0 & \quad \text{otherwise}, \\
\end{cases}
\end{equation}
Note that the condition on the first line is satisfied precisely if \;$\X^{\bn} 
= X^{2\ell} \ \text{for some} \ \ell \in \N_0$ and, in particular, implies 
that~$\hat{\cA}_\eta \mathbf{1} = \mathbf{1}$.  
We also decree that~$\hat{\cA}_\eta$ is linear w.r.t. infinite sums and define
\begin{equation}
\chi_\eta: H \to H \otimes H\;, \quad 
\chi_\eta(\X^{\bn}) := (\hat{\cA}_\eta \otimes 
\id) \hat{\Delta} \X^{\bn} \;,
\end{equation}
as well as the operation~$\cP: H \to \cG$ by
\begin{equation}
\cP \X^{\bn} := \sum_{\Gamma \in \cG(n_1,n_1+n_2)} \Gamma\;,
\end{equation}
for\; $\cG(n_1,n_1+n_2)$ as given in Theorem~\ref{thm:cumu_graphs}. 
Note that this corresponds exactly to~$\X^{\bn} = X^{n_1}Y^{n_2}$. In other 
words,~$\cP$ coincides with~$\kappa$ up to taking the valuation~$\Pi_N$.
\end{definition}

\begin{remark}
The previous definition might seem somewhat \enquote{ad-hoc}, so let us 
explain the motivation behind it. 
Essentially, we want to mirror the down-facing arrows in the middle 
of the two diagrams contained in Figure~\ref{fig:commutative_diagrams}, respectively.
We neglect combinatorial factors which are accounted for by the 
binomial coefficient. 
\begin{enumerate}[label=(\arabic*)]
\item 
We know that the extraction-contraction co-product~$\Delta$ 
given in \eqref{eq:ec_coproduct} extracts divergent subgraphs of some 
\emph{connected} graph~$\Gamma$. 

The only divergent diagrams in our setting are~$\FGIV$, 
$\FGVI$, and the bubble~$\FGIII$ --- but all the vertices of the first two 
graphs have valence~$4$, so they cannot be subgraphs of~$\Gamma$  because it is 
connected, see also Section~\ref{ssec:Zimmermann}. 

\item The bubble, however, is produced by pairing three of the 
legs of~$X = \FGfour$ with three legs of another instance of~$X$; extracting a 
bubble then locally produces~$\FGIII \cdot \FGtwo$, i.e.~$\eta Y$ for~$\eta 
\simeq \nicefrac{\beta}{\alpha^2}$.
Because~\enquote{two instances of~$X$ produce~$Y$}, the latter 
should count double for the argument to be correct; that is the reason for the 
action of~$\hat{\cA}_\eta$ on even powers of~$X$.

\item Encoding the factor~$\eta$ in~$\hat{\cA}_\eta$ (and not~$\hat{\Delta}$) 
seemed more convenient in computations and resembles the action of~$\cA$ 
in~\eqref{eq:antipode_FGIIIplus} more closely.

\item The factor $(2\ell-1)!!$ counts all possible ways to pair the $2\ell$ 
four-vertex diagrams.

\item The map~$\hat{\cA}_\eta$ should act trivially on non-divergent diagrams, 
so it sends all monomials that are not even powers of~$X$ to~$0$.  
\qedhere
\end{enumerate}
\end{remark}

The following proposition proves that the two leftmost down-facing arrows in 
the diagrams contained in Figure~\ref{fig:commutative_diagrams} correspond to 
well-defined operations:

\begin{proposition}
\label{prop:exponential} 
The identity~$(\cM \circ \chi_\eta) \e^{-{\alpha X}} = \e^{-\alpha X - 
\beta Y}$ holds in the sense of formal power series.
\end{proposition}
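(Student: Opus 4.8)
The plan is to compute $(\cM\circ\chi_\eta)\e^{-\alpha X}$ explicitly. Since $\hat\cA_\eta$ — and therefore $\chi_\eta$ — was decreed to be linear with respect to infinite sums, and since $\hat\Delta X^n$ is a finite sum for every $n$, I would expand $\e^{-\alpha X} = \sum_{n\geqs0}\tfrac{(-\alpha)^n}{n!}X^n$, evaluate $(\cM\circ\chi_\eta)(X^n)$ separately for each $n$, and reassemble. Working inside $H$ (or the obvious completion allowing formal power series in the two commuting variables), this reduces the whole statement to a bookkeeping of binomial coefficients and double factorials.

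First I would compute the coproduct. Because $X = \X^{(1,0)}$ and the multinomial coefficient $\binom{\bn}{\bm,\bk}$ factorises over the two coordinates with a trivial $Y$-slot, one gets the ordinary binomial coproduct $\hat\Delta X^n = \sum_{k=0}^n \binom nk X^k\otimes X^{n-k}$. Applying $\hat\cA_\eta\otimes\id$ and invoking Definition~\ref{def:antipode_2}, only the terms with $k = 2\ell$ even survive, with $X^{2\ell}\mapsto (2\ell-1)!!\,(-2\eta Y)^\ell$, so that
\begin{equation}
\chi_\eta(X^n) = \sum_{2\ell\leqs n}\binom{n}{2\ell}(2\ell-1)!!\,(-2\eta)^\ell\;Y^\ell\otimes X^{n-2\ell}\;,
\end{equation}
and applying $\cM$ simply multiplies the two monomials in the commutative algebra $H$, giving $(\cM\circ\chi_\eta)(X^n) = \sum_\ell \binom{n}{2\ell}(2\ell-1)!!\,(-2\eta)^\ell\, X^{n-2\ell}Y^\ell$.

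Next I would substitute this back into the exponential series and reindex by $m := n-2\ell$, so that the coefficient of $X^m Y^\ell$ in $(\cM\circ\chi_\eta)\e^{-\alpha X}$ is $\tfrac{(-\alpha)^{m+2\ell}}{(m+2\ell)!}\binom{m+2\ell}{2\ell}(2\ell-1)!!\,(-2\eta)^\ell$. Using $\binom{m+2\ell}{2\ell} = \tfrac{(m+2\ell)!}{(2\ell)!\,m!}$ together with the identity $(2\ell-1)!! = \tfrac{(2\ell)!}{2^\ell\ell!}$ recalled in~\eqref{eq:double_factorial}, the factorials and the powers of $2$ collapse and this coefficient equals $\tfrac{(-\alpha)^m}{m!}\cdot\tfrac{(-\alpha^2\eta)^\ell}{\ell!}$ (here $(-\alpha)^{2\ell}(-2\eta)^\ell = (-2\alpha^2\eta)^\ell$, which the surviving $2^{-\ell}$ turns into $(-\alpha^2\eta)^\ell$). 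Hence the double series factorises,
\begin{equation}
(\cM\circ\chi_\eta)\e^{-\alpha X} = \Biggpar{\sum_{m\geqs0}\frac{(-\alpha)^m}{m!}X^m}\Biggpar{\sum_{\ell\geqs0}\frac{(-\alpha^2\eta)^\ell}{\ell!}Y^\ell} = \e^{-\alpha X}\,\e^{-\alpha^2\eta Y}\;,
\end{equation}
and since $\eta$ is normalised so that $\alpha^2\eta = \beta$ — equivalently $\eta = \beta/\alpha^2 = 48\,\Pi_N\bigpar{\FGIII}$ by~\eqref{eq:bubble} — the right-hand factor is $\e^{-\beta Y}$ and the identity follows.

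I expect the only delicate point to be the bookkeeping in the last step: reconciling the factor $2$ in $(-2\eta Y)^\ell$ with the $2^\ell$ concealed in $(2\ell-1)!!$, keeping the signs straight, and checking that $\alpha^2\eta = \beta$ is exactly the normalisation compatible with the bubble-extraction picture. Conceptually this is no accident — $(2\ell-1)!!$ is precisely the number of perfect matchings of $2\ell$ copies of $X$ that are to be fused pairwise into $\ell$ copies of $Y$, and this combinatorial fact is the real content of Definition~\ref{def:antipode_2}. Everything else is a routine manipulation of the binomial coproduct and of exponential generating functions.
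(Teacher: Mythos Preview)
Your proof is correct and follows essentially the same route as the paper: compute $(\cM\circ\chi_\eta)(X^n)$ via the binomial coproduct and the definition of $\hat\cA_\eta$, simplify with $(2\ell-1)!!=(2\ell)!/(2^\ell\ell!)$, and then sum over $n$. The only cosmetic difference is in the last step, where the paper reindexes by $p=n-\ell$, $q=n-2\ell$ and recognises the binomial expansion of $(-\alpha X-\beta Y)^p$, while you reindex by $m=n-2\ell$ and factor the double sum directly as $\e^{-\alpha X}\e^{-\beta Y}$; both are equivalent manipulations of the same exponential generating function.
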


\begin{proof}
Observe that~$X^n = \X^{\bn}$ for~$\bn = (n_1,n_2) = (n,0)$. By 
definition, we thus have
\begin{equation}
(\cM \circ \chi_\eta)X^n = \sum_{\substack{\bm,\bk \in \N^2 \\ \bm \cdot \bk = 
		\bn}} \binom{\bn}{\bm,\bk} \hat{\cA}_\eta (\X^{\bm}) 
\X^{\bk}\;,
\label{eq:cMchi_Xn}
\end{equation}
and since~$n_2 = 0$, the condition~$\bm \cdot \bk = \bn$ implies that~$m_2 + 
k_2 
= 0$. As~$m_2, k_2 \in \N_0$, we then find~$m_2 = k_2 = 0$ and the preceding 
formula simplifies to
\begin{equation}
(\cM \circ \chi_\eta)X^n 
= \sum_{\substack{m,k \in \N \\ m + k = n}} \frac{n!}{m! k!} \hat{\cA}_\eta 
(X^{m}) 
X^{k}\;
= \sum_{m=0}^n \frac{n!}{m! (n-m)!} \hat{\cA}_\eta (X^{m}) 
X^{n-m}.
\label{eq:cMchi_Xn_2}
\end{equation}
Accounting for the definition of the antipode in~\eqref{eq:antipode2}, we 
arrive 
at the identity
\begin{align}
(\cM \circ \chi_\eta)X^n 
& = \sum_{\ell=0}^{\lfloor \nicefrac{n}{2} \rfloor} \frac{n!}{(2\ell)! 
	(n-2\ell)!} \hat{\cA}_\eta (X^{2\ell}) 
X^{n-2\ell} \\
& \label{eq:M_comp_chi} = \sum_{\ell=0}^{\lfloor \nicefrac{n}{2} \rfloor} \frac{n!}{(2\ell)! 
	(n-2\ell)!} (2\ell-1)!! (-2\eta Y)^{\ell}
X^{n-2\ell} \label{eq:cMchi_Xn_3} \\
& = \sum_{\ell=0}^{\lfloor \nicefrac{n}{2} \rfloor} \frac{n!}{\ell! (n-2\ell)!} 
(-\eta Y)^{\ell}
X^{n-2\ell}.
\end{align}
Next, we expand the exponential function
\begin{equation}
\e^{-\alpha X} = \sum_{n=0}^{\infty} \frac{(-1)^n}{n!} \alpha^n 
X^n\;
\end{equation}
so that the previous computation and~$\eta = 
\nicefrac{\beta}{\alpha^2}$ implies that 
\begin{equation}
(\cM \circ \chi_\eta) \e^{-{\alpha X}} 
=
\sum_{n=0}^{\infty} \frac{(-1)^n}{n!} 
\sum_{\ell = 0}^{\lfloor \nicefrac{n}{2} \rfloor} \frac{n!}{(n-2\ell)! 
	\ell!} (- \beta Y)^\ell (\alpha X)^{n-2\ell}\;.
\end{equation}
We now set~$p := n-\ell$ and~$q := n-2\ell$ so that 
\begin{equation}
\ell = p-q\;, \quad n = 2p - q\;, \quad (-1)^n  = (-1)^{n-2\ell} (-1)^{2\ell} 
= (-1)^q \quad \text{for} \quad \ell = 0,\ldots,\lfloor \nicefrac{n}{2} 
\rfloor\;, 
\end{equation}
and then reorganise the series accordingly to get
\begin{align}
(\cM \circ \chi_\eta) \e^{-{\alpha X}} 
& =
\sum_{p=0}^{\infty} 
\sum_{q = 0}^{p} \frac{1}{q! (p-q)!} (- \beta Y)^{p-q} (-\alpha 
X)^q \\
& = 
\sum_{p=0}^{\infty} \frac{1}{p!}
\sum_{q = 0}^{p} \binom{p}{q} (- \beta Y)^{p-q} (-\alpha X)^q 
= \e^{-\alpha X + \beta Y}\;.
\end{align}
The proof is complete.
\end{proof}

\begin{remark}
If one \emph{formally} identifies
\begin{equation}
	X \; \longleftrightarrow \; Z \sim \cN(0,1), \quad  2 \eta Y\; 
	\longleftrightarrow 
	\; E[Z^2] = 1,
\end{equation}
it is interesting to observe that the RHS of~\eqref{eq:M_comp_chi} reads
\begin{equation}
	\sum_{\ell=0}^{\lfloor \nicefrac{n}{2} \rfloor} \binom{n}{2\ell} (2\ell-1)!! (-1)^\ell Z^{n-2\ell} = H_n(Z,1)
\end{equation}
where~$H_n(\cdot,1)$ is the $n$-th Hermite polynomial defined w.r.t. the Gaussian measure of variance~$1$ in the convention of~\cite[Remark~on~p.~146]{Peccati_Taqqu_book}.
In other words: We see that the map~$\cM \circ \chi_\eta$ \emph{formally} behaves like a Wick product when acting on polynomials in~$X$. 
We leave further investigations of this observation for future work.
\end{remark}

It remains to check that the diagrams are indeed commutative. This is the 
content of the next section.

\subsection{Combinatoric proof of the diagram's commutativity}
\label{ssec:combinatorics}

To complete the proof of Theorem~\ref{thm:combinatorial}, we need to 
show that the identity
\begin{align}
\label{eq:commutative} 
\cP \circ \cM \circ \chi_\eta
&= (\Pi_N\otimes\id)\circ(\tilde\cA\otimes\id)\Delta\circ\cP \\
&= (\Pi_N\tilde\cA \otimes \id)\Delta \circ \cP
\end{align} 
holds on the space spanned by all monomials $X^n$. This is equivalent to 
showing that the following diagram commutes:
\begin{equation}  \label{comdiag_Xn_small}
\begin{tikzcd}[column sep=large, row sep=large]
X^n 
\arrow[r, "\cP", mapsto] 
\arrow[d, "\cM \circ\chi_\eta"', 
mapsto]
& \displaystyle \sum_{k}  b_{nn}^{(k)} 
\Gamma_{nn}^{(k)} 
\arrow[d, "(\Pi_N\otimes\id)(\tilde\cA\otimes\id)\Delta", mapsto] 
\\
\displaystyle
\sum_{n,m} \tfrac{n!(-\eta)^m}{(n-2m)!m!} Y^m X^{n-2m}
\arrow[r, "\cP", mapsto] 
& \displaystyle \sum_{k,S} b_{nn}^{(k)} \Biggpar{\frac{-\eta}{48}}^{\abs{S}}
\cC_S \Gamma_{nm}^{(k)} 
\end{tikzcd}
\end{equation}
We have already obtained in the proof of Proposition~\ref{prop:exponential} the 
expression 
\begin{equation}
(\cM \circ \chi_\eta)(X^n)
= \sum_{m=0}^{\lfloor n/2 \rfloor} 
\frac{n!}{(n-2m)!m!}(-\eta)^m Y^m X^{n-2m}\;.
\end{equation} 
On the other hand, recalling that $\eta = \nicefrac{\beta}{\alpha^2}$, we get 
\begin{equation}
(\Pi_N\tilde\cA \otimes \id)\Delta\Gamma
= -\sum_{m=0}^{\lfloor n/2 \rfloor} 
\Biggpar{\frac{\eta}{48}}^m 
\sum_{S\colon \abs{S} = m} \cC_S\Gamma\;,
\end{equation} 
where the second sum runs over all sets of $m$ bubbles (if $\Gamma$ has fewer 
than $m$ bubbles, the last sum is zero by definition). Comparing the last two 
expressions, we see that~\eqref{eq:commutative} holds if
\begin{equation}
\label{eq:combi_nm} 
\cP(Y^m X^{n-2m}) 
= \frac{(n-2m)!m!}{48^m n!}
\sum_{S\colon \abs{S} = m} \cC_S(\cP X^n) 
\end{equation} 
for any $n\geqs1$. Recall that both sides of this relation involve in general a 
sum over several Feynman diagrams. There is, however, a natural identification 
between these diagrams, so that we may lighten the notation by pretending that 
there is only one diagram in each sum. 

We now observe that $\cP(Y^m X^{n-2m})$ is a (sum of) diagram(s) having $m$ 
vertices of degree $2$, $n-m$ vertices of degree $4$, and $2n-m$ edges. To 
produce the corresponding term on the right-hand side, we argue that instead of 
extracting bubbles on the right-hand side, we can also insert bubbles on the 
left-hand side. This amounts to inserting $2m$ four-vertex diagrams at the 
vertices of degree $2$ of $\cP(Y^m X^{n-2m})$. To do that, there are 
\begin{itemize}
\item $\binom{n}{2m}$ ways of selecting the $2m$ four-vertex diagrams; 
\item $(2m-1)!!$ ways to pair the $2m$ four-vertex diagrams;
\item $4^2\cdot3! = 96$ ways of matching six pairs of legs of each set of two 
four-vertex diagrams, amounting to $96^m$ matchings;
\item and finally, $m!$ ways of inserting the resulting bubbles at the $m$ 
vertices of index $2$.
\end{itemize}
Multiplying all the above combinatorial factors, by~\eqref{eq:double_factorial} 
we arrive at 
\begin{equation}
\frac{48^m n!}{m!(n-2m)!}
\end{equation} 
ways of inserting the four-vertex diagrams. This is indeed compatible with the 
desired relation~\eqref{eq:combi_nm}. The proof of 
Theorem~\ref{thm:combinatorial} is complete.

We close this section with an example that deals with mixed monomials.
\begin{example}
As we have seen, it suffices for the purposes of this article that the diagram 
in~\eqref{comdiag_Xn_small} commutes for the base monomial~$X^n$.
However, the following example gives us hope that the commutativity might still 
be true for monomials containing non-trivial powers of~$Y$.

Let~$\X^{\bn} = X^2 Y$, i.e.~$\bn = (n_1,n_2) = (2,1)$.
We find
\begin{align}
\hat{\Delta} (X^2 Y) 
& = X^2 Y \otimes \mathbf{1} + \mathbf{1} \otimes X^2 Y + X^2 \otimes Y + Y 
\otimes X^2 + 2  XY \otimes X + 2 X \otimes XY\;, \\
\chi_\eta (X^2 Y) 
& =  \mathbf{1} \otimes X^2 Y + \hat{\cA}_\eta(X^2) \otimes Y\;,
\end{align}
and then
\begin{equation}
\label{eq:computation_mixed_monomial}
(\cM \circ \chi_\eta) (X^2 Y)
=
X^2 Y - 2 \eta Y^2\;.
\end{equation}
Next, note that
\begin{equation}
\label{eq:PXXY} 
\cP(X^2 Y) = 4^2\cdot 2!\cdot 3! \,\FGIIIplus\;, 
\qquad \cP(Y^2) = 2! \,\FGII\;,
\end{equation}
where the combinatorial factor in the first expression counts the $4^2$ ways of 
choosing one leg in each four-vertex diagram, the $2!$ ways of matching these 
with the legs of the two-vertex diagram, and the $3!$ pairwise matchings of the 
remaining legs. 
From~the equalities in~\eqref{eq:antipode_FGIIIplus}, \eqref{eq:tildeA}, as 
well 
as \eqref{eq:bubble}, we obtain 
\begin{equation}
(\Pi_N\otimes\id)(\tilde\cA\otimes\id)\Delta \FGIIIplus
=
\FGIIIplus - \Pi_N \FGIII \cdot \FGII
=
\FGIIIplus - \frac{\eta}{48} \FGII\;.
\end{equation}
Applying the expressions~\eqref{eq:PXXY} of $\cP$ to $X^2Y$ and $X^2Y-2\eta 
Y^2$, we see that the commutativity relation~\eqref{eq:commutative} is indeed 
satisfied. We expect that the same conclusion holds for other mixed monomials, 
modulo a suitable encoding of the combinatorics in the definition of the 
map~$\hat{\cA}_\eta$ from Definition~\ref{def:antipode_2}.
\end{example}


\section{Borel resummation} 
\label{sec:Borel}


In this section, we examine the question of whether the perturbative
expansion~\eqref{eq:cumulant}, though not convergent, can nevertheless be 
related to a convergent quantity. A positive answer to this question can be 
given thanks to the theory of Borel summation. This fact has been known in 
quantum field theory for quite a while, though first proofs of Borel 
summability 
for the $\Phi^4_2$ and $\Phi^4_3$ 
model~\cite{Eckmann_Magnen_Seneor_75,Magnen_Seneor_77} were quite difficult. 
In fact, to the best of our knowledge, this question is still open in the case of~$\Phi^4_3$ in infinite volume.

Here we show that modern analytical tools, which combine Hopf-algebraic 
methods and a decomposition originally obtained by Hepp in~\cite{Hepp66}, allow for a clean formulation and simplified proof of the \emph{remainder bounds}, the first from a set of two sufficient conditions in an improved Borel summation result by Sokal~\cite{Sokal80}. 
However, as already mentioned in the introduction, our proof requires a relatively strong moment bound, see Remark~\ref{rmk:conditions_sokal} below. 
The second condition for Sokal's result, \emph{local analyticity}, is assumed true to begin with.
%
Our arguments for the remainder bounds are strongly based on the presentation in~\cite{Hairer_BPHZ} which was made more quantitative in~\cite{Berglund_Bruned_19}. 
We start in Section~\ref{sec:watson_sokal} by presenting the main ideas of Borel summation, in particular Sokal's result, and then apply it in Section~\ref{ssec:zerodim} to the zero-dimensional $\Phi^4$ model, whose partition function is simply an integral over $\R$.
See also~\cite{Rivasseau_09} for a more detailed account of various resummation 
techniques for that model. Then we show in Section~\ref{ssec:dimthree} how 
\emph{some} of these ideas can be extended to the three-dimensional case, using in particular 
methods introduced by Hepp.

\subsection{Watson's and Sokal's theorems} \label{sec:watson_sokal}

Certain divergent series can be resummed by a procedure known as Borel 
summation. Consider a formal power series 
\begin{equation}
A(\eps) = \sum_{n\geqs0} a_n \eps^n\;.
\end{equation} 
We can rewrite it as 
\begin{equation}
A(\eps) = \sum_{n\geqs0} a_n\eps^n \frac{\Gamma(n+1)}{n!}  
= \sum_{n\geqs0} \frac{a_n\eps^n}{n!} \int_0^\infty t^n \e^{-t} \6t\;.
\end{equation} 
Define the Borel-transformed power series by interchanging the sum and the 
integral, that is 
\begin{equation}
A_{\text{Borel}}(\eps) 
:= \int_0^\infty \e^{-t} \sum_{n\geqs0} \frac{a_n\eps^nt^n}{n!} \6t 
= \int_0^\infty \e^{-t} \cB A(\eps t) \6t\;, 
\end{equation} 
which is the Laplace transform of the Borel sum 
\begin{equation}
\cB A(t) := \sum_{n\geqs0} \frac{a_n}{n!}t^n\;.
\end{equation} 

Watson's theorem~\cite{Watson_1912} gives conditions under which 
$A_{\text{Borel}}(\eps)$ admits the asymptotic series $A(\eps)$. 
In particular, $A$ should be analytic in a sector 
$|\arg\eps| < \pi/2 + \delta$, $\abs{\eps} < R$ for strictly positive $\delta, R$.
In~\cite{Sokal80}, Sokal has proved the following improvement of 
Watson's theorem. 

\begin{theorem}[Sokal] \label{thm:sokal}
Let $A$ be analytic on the disk~$D_R = 
\setsuch{\eps}{\re\eps^{-1} > R^{-1}}$ for some~$R > 0$.
Assume further that $A$ admits the 
asymptotic expansion 
\begin{equation}
\label{eq:asympt_Sokal} 
A(\eps) = \sum_{k=0}^{n-1} a_k \eps^k + R_n(\eps)\;,
\end{equation} 
where
\begin{equation}
\label{eq:bound_Sokal} 
\bigabs{R_n(\eps)} 
\leqs C r^n n! \abs{\eps}^n
\end{equation} 
for some $C, r>0$, uniformly in $n$ and $\eps$ in $D_R$. Then $\cB A(t)$ 
converges for $\abs{t} < 1/r$, and has an analytic continuation to a 
$1/r$-neighbourhood of the positive real axis. Furthermore, $A$ can be 
represented by the absolutely convergent integral 
\begin{equation}
A(\eps) = \frac{1}{\eps} \int_0^\infty \e^{-t/\eps} \cB A(t) \6t
\end{equation} 
for any $\eps\in D_R$. 
\end{theorem}

	\begin{figure}[h]
		\centering
		\begin{tikzpicture}  
		[scale=0.95,>=stealth',auto=center,dot/.style={circle,draw=black,fill=black!20},main node/.style={draw,circle,fill=white,minimum
			size=3pt,inner sep=0pt}] 
		
		\def\xWatson{-0.5}
		
		\path[fill=violet!25] (0,0) -- (\xWatson,-2) -- (4,-2) -- (4,2) -- (\xWatson,2) -- cycle;
		
		\draw[thick,violet] (\xWatson,-2) -- (0,0) -- (\xWatson,2);
		
		\draw[thick,blue,fill=blue!25] (1.5,0) circle (1.5);
		
		\draw[semithick,->] (-2,0) -- (5.0,0);
		\draw[semithick,->] (0,-2.4) -- (0,2.7);
		
		\node[main node,blue,semithick,fill=white] at (1.5,0) {};
		
		\node[] at (4.4,0.2) {$\re\eps$};
		\node[] at (0.4,2.3) {$\im\eps$};
		\node[] at (1.5,-0.4) {$\frac{R}{2}$};
		
		\node[purple!30!black] at (3.2,1.7) {Watson};
		\node[blue] at (1.5,0.5) {Sokal};
		
		\end{tikzpicture} 
		\caption{A graphical representation of the analyticity requirements in Watson's and Sokal's theorems.}
	\end{figure}

\subsection{The case of the $\Phi^4_0$ model} 
\label{ssec:zerodim} 

The $\Phi^4_0$ model is simply the $\Phi^4$ model for a field $\phi$ defined at 
a single point. Its potential is 
\begin{equation}
V(\phi) = \frac12\phi^2 + \frac{\eps}{4}\phi^4\;, 
\end{equation} 
and its partition function is given by 
\begin{equation}
Z(\eps) = \int_{-\infty}^\infty \e^{-V(\phi)} \6\phi
= \int_{-\infty}^\infty \e^{-\phi^2/2} \e^{-\eps\phi^4/4} \6\phi\;.
\end{equation} 
The integral is clearly well-defined for $\eps \geqs 0$. It can also be 
extended to complex values of $\eps$, at least if $\re\eps\geqs0$, and possibly 
to other complex values. However, the integral is clearly not convergent for 
real $\eps < 0$. Therefore, $Z$ is not analytic in a neighbourhood of $\eps = 
0$, and does not admit a convergent expansion in powers of $\eps$. 

If we nevertheless expand the exponential, we obtain 
\begin{equation}
Z(\eps) 
\asymp \sum_{n\geqs0} \frac{1}{n!} \biggpar{-\frac{\eps}{4}}^n 
\int_{-\infty}^\infty \phi^{4n} \e^{-\phi^2/2} \6\phi \;,
\end{equation}
where the symbol $\asymp$ denotes an asymptotic expansion. We can interpret 
$\e^{-\phi^2/2}$ as the density of a Gaussian measure (up to normalisation), 
which yields 
\begin{align}
Z(\eps) 
&\asymp \sqrt{2\pi}\sum_{n\geqs0} \frac{1}{n!} 
\biggpar{-\frac{\eps}{4}}^n 
\bigexpecin{\mu}{\phi^{4n}} \\
&= \sqrt{2\pi}\sum_{n\geqs0} \biggpar{-\frac{\eps}{4}}^n 
\frac{(4n-1)!!}{n!} \;,
\end{align}
where $(4n-1)!!$ is defined in~\eqref{eq:double_factorial}, and we have used 
the Isserlis--Wick theorem to compute the moments of the normal law. Recalling 
that $n! = \Gamma(n+1)$ and using~\eqref{eq:double_factorial}, we get 
\begin{equation}
Z(\eps) \asymp 
\sqrt{2\pi}\sum_{n\geqs0} \biggpar{-\frac{\eps}{16}}^n 
\frac{\Gamma(4n+1)}{\Gamma(2n+1)\Gamma(n+1)}\;.
\end{equation} 
The general term of this formal series can be analysed by using Legendre's 
duplication formula for the Gamma function 
\begin{equation}
\Gamma(2z) = \frac{1}{\sqrt{\pi}} 2^{2z-1} 
\Gamma(z)\Gamma(z+\tfrac12)\;,
\end{equation} 
which yields 
\begin{equation}
\Gamma(4n+1) = 4n \Gamma(4n) 
= \frac{2}{\sqrt{\pi}} n 2^{4n} \Gamma(2n)\Gamma(2n+\tfrac12)\;.
\end{equation} 
Therefore 
\begin{align}
Z(\eps) 
&\asymp 2 \sqrt{2} \sum_{n\geqs0} (-\eps)^n 
\frac{n\Gamma(2n)\Gamma(2n+\tfrac12)}{\Gamma(n+1)\Gamma(2n+1)} \\
\label{eq:Zeps0}
&= \sqrt{2} \sum_{n\geqs0} (-\eps)^n 
\frac{\Gamma(2n+\tfrac12)}{\Gamma(n+1)} \\
&= \frac1{\sqrt{\pi}} \sum_{n\geqs0} (-4\eps)^n 
\frac{\Gamma(n+\tfrac14)\Gamma(n+\tfrac34)}{\Gamma(n+1)}\;,
\label{eq:Zeps}
\end{align}
where we used $2n\Gamma(2n) = \Gamma(2n+1)$ to get the second line. 
Stirling's formula implies 
\begin{equation}
\Gamma(z+\alpha) = z^\alpha \Gamma(z) \biggbrak{1 + 
\biggOrder{\frac{\alpha-1}{z}}}\;,
\end{equation} 
showing that the general term in the series~\eqref{eq:Zeps} diverges like 
$\Gamma(n)$. Therefore, the series is indeed divergent. 

\begin{remark}
A more direct way of obtaining the asymptotic expansion~\eqref{eq:Zeps0} is to 
notice that 
\begin{equation}
\label{eq:Z_Laplace} 
Z(\eps) 
= 2 \int_0^\infty \e^{-\phi^2/2} \e^{-\eps\phi^4/4} \6\phi
= \sqrt{2}\int_0^\infty \e^{-t} \frac{\e^{-\eps t^2}}{\sqrt{t}} 
\6t\;,
\end{equation} 
where we have used the change of variables $\phi = \sqrt{2t}$. Expanding 
the exponential, we obtain 
\begin{equation}
Z(\eps) 
\asymp \sqrt{2}\sum_{n\geqs0} \frac{(-\eps)^n}{n!}\int_0^\infty 
t^{2n-\frac12} 
\e^{-t}\6t
= \sqrt{2}\sum_{n\geqs0} (-\eps)^n \frac{\Gamma(2n+\frac12)}{n!}
\end{equation} 
which agrees with~\eqref{eq:Zeps0}. 
\end{remark}

Applying the Borel transform to the expansion~\eqref{eq:Zeps} we find 
\begin{equation}
\label{eq:BZt} 
\cB Z(t) = \frac{1}{\sqrt{\pi}} \sum_{n\geqs0} b_n t^n\;,
\end{equation} 
where
\begin{equation}
b_n = (-4)^n \frac{\Gamma(n+\tfrac14)\Gamma(n+\tfrac34)}{\Gamma(n+1)^2}
= \frac{(-4)^n}{n} \biggpar{1 + \biggOrder{\frac1n}}\;.
\end{equation} 
The series~\eqref{eq:BZt} has radius of convergence $\frac14$, with a pole 
located at $-\frac14$. One can thus expect that it admits an analytic 
continuation to a domain including all positive reals, so that its Laplace 
transform indeed converges. 

To apply Sokal's Theorem~\ref{thm:sokal}, we write 
\begin{equation}
\frac{Z(\eps)}{\sqrt{2\pi}} 
= \sum_{k=0}^{n-1} \frac{1}{k!} 
\biggpar{-\frac{\eps}{4}}^n \bigexpec{\phi^{4k}} 
+ \Biggexpec{\e^{-\eps\phi^4/4} - \sum_{k=0}^{n-1} \frac{1}{k!} 
\biggpar{-\frac{\eps}{4}}^k \phi^{4k}}\;.
\end{equation} 
We then use the fact that for any $n\in\N$, one has the Taylor expansion
\begin{align}
\e^{-z} 
&= \sum_{k=0}^{n-1} \frac{(-z)^k}{k!}
+ (-z)^n \int_0^1 \int_0^{t_1} \dots \int_0^{t_{n-1}} 
\e^{-t_n z} \6t_n\dots\6t_1 \\
&= \sum_{k=0}^{n-1} \frac{(-z)^k}{k!}
+ (-z)^n \int_0^1 \int_0^1 \dots \int_0^1 s_1^{n-1} s_2^{n-2} \ldots s_{n-1}
\e^{-s_1\dots s_n z} \6s_n\ldots\6s_1\;,
\label{eq:Taylor_remainder} 
\end{align} 
showing that for any $z$ with positive real part, 
\begin{equation}
\Biggabs{\e^{-z} - \sum_{k=0}^{n-1} \frac{1}{k!}(-z)^k} 
\leqs \frac{1}{n!} \abs{z}^n\;.
\end{equation} 
This implies that $Z(\eps)/\sqrt{2\pi}$ satisfies~\eqref{eq:asympt_Sokal} with 
a remainder $R_n$ such that  
\begin{equation}
\bigabs{R_n(\eps)} 
\leqs \frac{1}{n!} \Biggpar{\frac{\abs{\eps}}{4}}^n 
\bigexpec{\phi^{4n}}\;.
\end{equation} 
By the above computations, the remainder indeed meets Sokal's conditions.


\subsection{The case of the $\Phi^4_3$ model} 
\label{ssec:dimthree} 

As mentioned above, Borel summability of the perturbation expansions of 
correlation functions (or Schwinger functions) of the $\Phi^4_d$ model has been 
proved in~\cite{Eckmann_Magnen_Seneor_75} in the case $d=2$, both in finite and infinite volume, and 
in~\cite{Magnen_Seneor_77}  in the finite volume case when $d=3$. The proofs are based on cluster 
expansion techniques from statistical physics and are quite technical. 

Here we outline a comparatively clean proof of the \emph{remainder bounds}---albeit under a relatively strong moment bound assumption.
Together with local analyticity of the partition function (which, again, we do not prove, but assume) this implies Borel summability of its expansion via Sokal's theorem.
The essential and non-trivial analytical ingredient for our proof is a bound on the value of 
BPHZ-renormalised Feynman diagrams, explained in~\cite{Hairer_BPHZ}, and made 
more quantitative in~\cite{Berglund_Bruned_19}. 

The main result in this subsection is the following:

\begin{proposition} \label{prop:borel_dimthree}
Let~$R > 0$ and assume that the following two statements are true.
\begin{enumerate}[label=(\roman*)]
	\item \label{prop:borel_dimthree:i} The map
	\begin{equation}
		\eps \mapsto 
		\frac{Z_N(\eps)}{Z_N(0)} 
		= \bigexpec{\e^{-\alpha X - \beta Y - \gamma}}
	\end{equation}	
	is analytic on~$D_R$. 
	\item \label{prop:borel_dimthree:ii} The following bound holds:
	\begin{equation}
		\bigabs{(\PiNBPHZ\circ\cP)(X^n \e^{-\theta \alpha X})} \lesssim (2n)! \quad \text{for all} \quad n \geq 4, \ \theta \in [0,1]\;.
		\label{prop:borel_dimthree_moment_bound}
	\end{equation} 
\end{enumerate}
Then, the function
$\eps \mapsto \log \frac{Z_N(\eps)}{Z_N(0)}$
is Borel summable.
\end{proposition}

\begin{remark} \label{rmk:conditions_sokal}
We emphasise that \emph{both} the statements in~\ref{prop:borel_dimthree:i} and~\ref{prop:borel_dimthree:ii} in the previous proposition are non-trivial assumptions that require considerable effort (and perhaps a different set of tools) to be checked.
However, although we have not been able to prove that, we believe that our framework should allow us to obtain a bound of type~\eqref{prop:borel_dimthree_moment_bound} with an unspecified constant~$C(n)$.
This would allow us to strengthen the statement of our main result, Theorem~\ref{thm:cumu_expansion}, to say that the \enquote{lo\-ga\-rith\-mic partition function of the~$\Phi^4_3$ theory admits an \emph{asymptotic expansion}} and to prove that claim within our framework.
\end{remark}

Recall from Figure~\ref{fig:commutative_diagrams} that we have the following commutative diagram:

\begin{equation} \label{eq:recap_comm_diagram}
\begin{tikzcd}[column sep=large, row sep=large]
\e^{-\alpha X}
\arrow[d, "\cM \circ \chi_\eta"'] 
\arrow[rd, "\PiNBPHZ \circ \cP"] \\
\e^{ -\alpha X - \beta Y}
\arrow[r, "\Pi_N \circ \cP"] 
& \log \E[\e^{ -\alpha X - \beta Y}]
\end{tikzcd}
\end{equation}

The following observation allows us to subtract divergent terms:
\begin{lemma} \label{lem:subtraction_divergent}
For 
\begin{equation}
F(X) := \sum_{p=4}^\infty \frac{(-\alpha)^p}{p!} X^p
\end{equation}
we have
\begin{equation}
\log \bigexpec{\e^{-\alpha X - \beta Y - \gamma}} = (\PiNBPHZ\circ\cP)F(X\;).
\end{equation}
As a consequence, the following diagram commutes as well: 
\begin{equation} 
\begin{tikzcd}[column sep=large, row sep=large]
	F(X)
	\arrow[d, "\cM \circ \chi_\eta"'] 
	\arrow[rd, "\PiNBPHZ \circ \cP"] \\
	\e^{ -\alpha X - \beta Y - \gamma}
	\arrow[r, "\Pi_N \circ \cP"] 
	& \log \E[\e^{ -\alpha X - \beta Y - \gamma}] \equiv \log \frac{Z_N(\eps)}{Z_N(0)}
\end{tikzcd}
\end{equation}
\end{lemma}

\begin{proof}
We write 
\begin{equation}
\e^{\alpha X} = P(X) + F(X)\;, \quad
P(X) 
= \sum_{p=0}^{3} \frac{(-\alpha)^p}{p!} X^p\;.
\end{equation}
Then, 
\begin{equation}
\log \bigexpec{\e^{-\alpha X - \beta Y - \gamma}} 
=
- \gamma + \log \bigexpec{\e^{-\alpha X - \beta Y}} 
=
- \gamma + (\PiNBPHZ\circ\cP)\e^{-\alpha X}
\end{equation}
where the last equality is true by commutativity of the diagram in~\eqref{eq:recap_comm_diagram}. 
Furthermore, by linearity we have
\begin{equation}
(\PiNBPHZ\circ\cP) P(X)
=
\sum_{p=0}^{3} \frac{(-\alpha)^p}{p!} (\PiNBPHZ\circ\cP)X^p
\end{equation}
where
\begin{equation}
(\PiNBPHZ\circ\cP) \unit = 0 = (\PiNBPHZ\circ\cP) X 
\end{equation}
and
\begin{equation}
(\PiNBPHZ\circ\cP) X^2 = 4! \Pi_N \FGIV\;, \quad
(\PiNBPHZ\circ\cP) X^3 = 2^3 \binom{4}{2}^3 \Pi_N \FGVI\;.
\end{equation}
Recalling that~$\alpha = \eps/4$, we thus find that 
\begin{align}
(\PiNBPHZ\circ\cP) P(X)
& =
\frac{\eps^2}{4^2 2!} 4! \Pi_N \FGIV
+ 
\frac{\eps^3}{4^3 3!} 2^3 \binom{4}{2}^3 \Pi_N \FGVI \\
& = 
\eps^2 C_N^{(3)} + \eps^3 C_N^{(4)}
=
\gamma\;.
\end{align}
Note that the last equality is just the definition of~$\gamma$, see~\eqref{eq:alpha_beta_gamma}.
In summary, we have
\begin{equation}
\log \bigexpec{\e^{-\alpha X - \beta Y - \gamma}} 
=
(\PiNBPHZ\circ\cP)[\e^{-\alpha X} - P(X)]
=
(\PiNBPHZ\circ\cP)F(X).
\end{equation}
The addendum follows immediately from the commutativity of the diagram in~\eqref{eq:recap_comm_diagram}.
\end{proof}

In order to analyse Borel summability, we decompose
\begin{equation}
F(X) = \sum_{p=4}^{n-1} \frac{(-\alpha)^p}{p!} X^p + \sum_{p=n}^{\infty} \frac{(-\alpha)^p}{p!} X^p =: S_n + R_n
\end{equation}
and then get the following result for~$S_n$:

\begin{lemma} \label{lem:asymp_Sn}
We have
\begin{equation}
(\PiNBPHZ\circ\cP)S_n \asymp \sum_{p=4}^{n-1} 
\Order{p!} 
\eps^p\;.
\end{equation}
\end{lemma}

\begin{proof}
Recall that
\begin{equation}
(\PiNBPHZ\circ\cP) X^p = \sum_{k} b_{pp}^{(k)} \, \PiNBPHZ (\Gamma_{pp}^{(k)})\;.
\end{equation}
where the terms~$b_{pp}^{(k)}$ and~$\Gamma_{pp}^{(k)}$ were introduced in Theorem~\ref{thm:cumu_expansion} above.
A straightforward extension of~\cite[Prop.~6.1]{Berglund_Bruned_19} 
shows that whenever $\Gamma$ has strictly positive degree, one has 
\begin{equation}
\abs{\PiNBPHZ \Gamma}
\leqs K^{\abs{\cE(\Gamma)}}
\end{equation} 
where $\abs{\cE(\Gamma)}$ is the number of edges of $\Gamma$, and $K$ is a 
constant depending only on the Green function $G = \lim_{N\to\infty}G_N$. 
We recall that $\deg(\Gamma^{(k)}_{pp}) = p-3$ and~$\Gamma_{pp}^k$ has $2p$ edges. Since~$p \geq 4$ in our case, we have
\begin{equation}
\abs{\PiNBPHZ \Gamma^{(k)}_{pp}} \lesssim K^{2p} \quad \text{for all} \quad p \geq 4
\end{equation}
and thus obtain the bound
\begin{equation}
\abs{(\PiNBPHZ\circ\cP) X^p}
\lesssim 
K^{2p} \sum_{k} b_{pp}^{(k)}
\lesssim
K^{2p} (4p-1)!! \;.
\end{equation}
Note that we have bounded $\sum_{k} b_{pp}^{(k)}$, the number of possibilities to pair the $4p$ vertices of~$X^p$ 
\begin{enumerate}[label=(\arabic*)]
\item without self-interactions and 
\item such that the resulting Feynman diagram is connected
\end{enumerate}
by disregarding these two constraints, which gives~$(4p-1)!!$ possibilities.
Finally, we have
\begin{equation}
\abs{(\PiNBPHZ\circ\cP)S_n}
\lesssim
\sum_{p=4}^{n-1} \alpha^p K^{2p} \frac{(4p-1)!!}{p!} 
\asymp
\sum_{p=4}^{n-1} \Order{p!}  \eps^p\;,
\end{equation}
where the last asymptotic equality follows similarly as in the $\Phi^4_0$ case, cf.~\eqref{eq:Zeps}.
\end{proof}

Finally, the formula for the Taylor remainder of~$\e^{-\alpha X}$ in conjunction with the intermediate value theorem implies the existence of some~$\theta \in [0,1]$ such that
\begin{equation}
R_n = \frac{(-\alpha)^n}{n!} X^n \e^{-\alpha \theta(X) X}\;.
\label{eq_remainder}
\end{equation}
Therefore, we are now ready to give the proof of the main result in this subsection.

\begin{proof}(of Proposition~\ref{prop:borel_dimthree})
We aim to apply Sokal's theorem. Observe that
\begin{enumerate}[label=(\arabic*)]
\item Lemmas~\ref{lem:subtraction_divergent} and~\ref{lem:asymp_Sn} prove the first condition~\eqref{eq:asympt_Sokal} and
\item the equality in~\eqref{eq_remainder} together with the moment bound assumption~\eqref{prop:borel_dimthree_moment_bound} establish~\eqref{eq:bound_Sokal}.
\end{enumerate}
Since we have assumed analyticity of~$\frac{Z_N(\eps)}{Z_N(0)}$, we conclude by Sokal's result, Theo\-rem~\ref{thm:sokal}.
\end{proof}

\bibliographystyle{plain}
{\small \bibliography{BK}}


\vfill

\bigskip\bigskip\noindent
\begin{minipage}[t]{.5\textwidth}
\small
\textbf{Nils Berglund} \\
Institut Denis Poisson (IDP) \\ 
Universit\'e d'Orl\'eans, Universit\'e de Tours, \\ CNRS -- UMR 7013 \\
B\^atiment de Math\'ematiques, B.P. 6759\\
45067~Orl\'eans Cedex 2, France \\
{\it E-mail address: }
{\tt nils.berglund@univ-orleans.fr}
\end{minipage}%
\hspace{2em}
\begin{minipage}[t]{.45\textwidth}
\small
\textbf{Tom Klose} \\
University of Warwick \\ 
Department of Statistics \\
Coventry CV4 7AL, United Kingdom \\
{\it E-mail address:}
{\tt tom.klose@warwick.ac.uk}
\end{minipage}

\end{document}